\DeclareMathAlphabet{\mathscrbf}{OMS}{mdugm}{b}{n}
\DeclareMathAlphabet\mathbfcal{OMS}{cmsy}{b}{n}
\newtheorem{cor}{Corollary}
\newtheorem{defn}{Definition}
\newtheorem{proposition}{Proposition}
\newtheorem{thm}{Theorem}
\newcommand*{\Comb}[2]{{}^{#1}C_{#2}}
\DeclareMathOperator*{\Motimes}{\text{\raisebox{0.25ex}{\scalebox{0.65}{$\bigotimes$}}}}
\begin{document}

\title{Towards Necessary and sufficient state condition for violation of a multi-settings Bell inequality}

\author{Swapnil Bhowmick}
\email{swapnilbhowmick@hri.res.in}
\affiliation{ Quantum Information and Computation Group,
Harish-Chandra Research Institute, A CI of Homi Bhabha National Institute, Chhatnag Road, Jhunsi, Prayagraj 211019, India}

\author{Som Kanjilal}
\email{som.kanjilal@inl.int}
\affiliation{International Iberian Nanotechnology Laboratory (INL), Av. Mestre José Veiga, 4715-330 Braga, Portugal}

\author{A. K. Pan }
\email{akp@phy.iith.ac.in}
\affiliation{Indian Institute of Technology Hyderabad, Kandi, Sangareddy, Telengana 502285, India}

\author{Souradeep Sasmal}
\email{souradeep.007@gmail.com}
\affiliation{Institute of Fundamental and Frontier Sciences, University of Electronic Science and Technology of China, Chengdu 611731, China}

\begin{abstract}
High-dimensional quantum entanglement and the advancements in their experimental realization provide a playground for fundamental research and eventually lead to quantum technological developments.  The Horodecki criterion determines whether a state violates Clauser-Horne-Shimony-Holt (CHSH) inequality for a two-qubit entangled state, solely from the state parameters. However, it remains a challenging task to formulate similar necessary and sufficient criteria for a high-dimensional entangled state for the violation of a suitable Bell inequality. Here, we develop a Horodecki-like criterion based on the state parameters of arbitrary two-qudit states to violate a two-outcome Bell inequality involving $2^{n-1}$ and $n$ measurement settings for Alice and Bob, respectively. This inequality reduces to the well-known CHSH and Gisin's elegant Bell inequalities for $n=2$ and $n=3$, respectively. While the proposed criterion is sufficient to violate the Bell inequality,  it becomes necessary as well for the following cases; (i) $m$ copies of Bell diagonal states for arbitrary $n$, (ii) Non-decomposable states whose correlation matrix is diagonalized by local unitaries, and (iii) for any arbitrary two-qubit state when $n=3$, where the maximal value of the Bell functional is achieved with Bob’s measurements being pairwise anticommuting. For any states, we derive the constraints on Alice's measurements in achieving the maximum quantum violation for this inequality. 
\end{abstract}

\pacs{} 
\maketitle

\section{Introduction} \label{intro}
Bell theorem asserts that certain quantum statistics generated by two distant parties (Alice and Bob) cannot be mimicked by a classical model satisfying the notion of local realism \cite{Bell1964}. This characteristic, widely referred to as Bell nonlocality \cite{Brunner2014}, is commonly demonstrated through the quantum violation of suitable Bell inequalities. In addition to challenging our conventional understanding about nature \cite{Spekkens2005, Pusey2012, Popescu2014, Bong2020, Renou2021}, Bell nonlocality has found a plethora of applications \cite{Brunner2014} in information processing tasks, ranging from cryptography \cite{Ekart1991, Acin2007}, communication complexity \cite{Brukner2004}, certified randomness generation \cite{Colbeck2012, Pironio2010, Wooltorton2022} to game theory \cite{Brunner2013a} and computation \cite{Brunner2014, Supic2020rev}. 

Note that entanglement is a necessary condition for demonstrating the Bell nonlocality, but it is not sufficient. Although Bell non-locality can be demonstrated for all pure entangled states \cite{Yu2012}, the case of mixed entangled states is not so straightforward. There exist mixed entangled states admitting local ontic models \cite{Werner1989, Augusiak2014, Bowles2015a,Hirsch2016}, therefore not violating any Bell inequality. This necessitates a criterion based on state parameters to characterize the class of nonlocal mixed entangled states.

One may surmise that a natural approach to framing such criteria involves determining the interrelationship between local observables for which a Bell functional attains the optimal quantum value. By doing so, we can intuitively fix the nature of these observables. For instance, the optimal quantum value of the CHSH function \cite{Clauser1969} is $2\sqrt{2}$, achieved with local anticommuting observables. One might expect that for any two-qubit state, the corresponding maximum CHSH value can be determined by optimizing over anticommuting local observables. However, Horodecki et al. \cite{Horodecki1995} demonstrated that this approach does not always work, as there are two-qubit states for which the maximum CHSH value is achieved when one party's observables are not anticommuting. 

Hence, deriving the violation of a given Bell inequality solely from the state parameters is a challenging proposition. This underscores the novelty of the Horodecki criterion \cite{Horodecki1995} that provides a \textit{necessary and sufficient} criterion in terms of the state parameters to determine whether a given two-qubit state will violate the CHSH inequality. Otherwise, one would need to perform the optimisation over all possible qubit observables to obtain the CHSH violation for a given state. The Horodecki criterion also determines the exact choice of observables in terms of state parameters. This criterion has been extended to non-projective measurements for bipartite qubit states \cite{Hall2022} to violate CHSH inequality and for the tripartite qubit case \cite{Siddiqui2022} to violate the Svetlichney inequality \cite{Svetlichny1987}. 

Recently, it has been brought to attention that higher dimensional quantum states offer more resources in a wide array of information processing tasks, including key distribution \cite{Cerf2002, Georgios2005, Mafu2013, Sekga2023}, random number generation \cite{Ma2016, Nie2014, Bai2021, Mannalatha2023}, and fault-tolerant quantum computation \cite{Gottesman1999, Omanakuttan2023, Kapit2016, omanakuttan2024}. Additionally, with an increasing number of experimental realizations of higher-dimensional quantum states \cite{Kues2017, Hu2020, He2022, Min-Hu2020}, they are becoming more sought after than ever before. These advancements immediately call for characterizing higher-dimensional mixed entangled states demonstrating the Bell nonlocality. As the local dimensions increase, the complexity of the problem grows exponentially. For instance, if the local dimension consists of an $m$-qubit system, each observable contains $\sim 2^{m}$ parameters, and the state space of an $m$-qubit system forms a $(2^{2m}-1)$-dimensional sphere \cite{NielsenChuang2010}.

In this work, by going beyond bipartite two-qubit states, we consider higher-dimensional bipartite states of local dimension $d=2^{m}$. Note that in higher dimensions, there are two broader classes of entangled states: The two-qudit state is either decomposable into $m$ copies of two-qubit states or not \cite{Kraft2018}. It is important to emphasize that, not being a dimension witness\footnote{A Bell inequality serves as a dimension witness if both its local and optimal quantum bounds are functions of the dimension of the shared state \cite{Brunner2008, Dallarno2012, Brunner2013, Guhne2014}.}, the CHSH inequality is inadequate to reveal the nonlocality of bipartite higher-dimensional states \cite{Brunner2008, Dallarno2012, Brunner2013, Guhne2014}. Therefore, we need to employ either a higher setting two-outcome Bell inequality that serves as a dimension witness \cite{Ghorai2018} or introduce higher-outcome Bell inequalities \cite{Salavrakos2017}. 

Here, we invoke a particular class of multi-settings, two-party, two-outcome Bell inequality \cite{Ghorai2018}, which has been shown as dimension witness \cite{Pan2020}. This inequality involves the measurement settings $2^{n-1}$ and $n$ for Alice and Bob, respectively. This inequality reduces to the well-known CHSH inequality \cite{Clauser1969} and Gisin's Elegant Bell inequality \cite{Gisin2007} for $n=2$ and $n=3$, respectively. The optimal quantum value of this inequality is achieved with at least $m$ copies of a maximally entangled two-qubit state \cite{Mahato2023} (or, equivalently \cite{Kraft2018}, a maximally entangled two-qudit state of local dimension $d=2^m$, where $m=\lceil \frac{1}{2}(n-1)\rceil$). Additionally, this inequality has found applications in a parity oblivious random-access-code \cite{Ghorai2018} and device-independent randomness generation scheme \cite{Mahato2023}.

In particular, we aim to develop a Horodecki-like criterion for a given two-qudit state to violate the $n$-settings Bell inequality.  For this purpose, it is necessary to optimize the $\sim 2^{n-1}(2^{n-1}+n)$ variables to determine the optimal value of the $n$-settings Bell inequality, which scales exponentially with $n$. To address this challenge, we begin by exploiting the structure of the inequality to prove that specific linear combinations of Alice's observables must be anticommuting to achieve the optimal quantum value. This finding allows us to restrict the domain of observables to those that are anticommuting on Alice's side, thereby reducing the complexity of the optimization problem to a linear function of $n$. Subsequently, by suitably defining a Horodecki-like function $M_n(\rho)$ in terms of state parameters, we show that the optimal Bell value for a given state is tightly lower bounded by $M_n(\rho)$, constituting a sufficient criterion for the violation of $n$-settings Bell inequality. This means that if $M_n(\rho)$ exceeds the local bound, the state is nonlocal; otherwise, the nature of the state remains inconclusive. 

Crucially, we further demonstrate that $M_n(\rho)$ saturates the optimal Bell value and thus becomes the necessary and sufficient criterion for the following two classes of states (i) $m$ copies of Bell diagonal states for arbitrary $n$, (ii) non-decomposable states whose correlation matrix can be diagonalized by local unitaries, and (iii) for any arbitrary two-qubit state when $n=3$. Interestingly, for these cases, we find that corresponding maximal values of the Bell inequality occur if Bob's measurements are mutually anticommuting.

The paper is structured as follows. To begin with, we briefly discuss the $n$-settings Bell inequality \cite{Ghorai2018} and the structure of an arbitrary two-qudit state in Sec.~\ref{pre}. We then state the precise constrained optimization problem corresponding to finding the Horodecki-like criteria for the $n$-settings inequality in Sec.~\ref{sop}. Our main result, given by Theorem \ref{theo}, is a sufficiency condition for Bell violation, which involves optimization over an exponentially reduced number of variables. The proof utilizes the fact that specific linear combinations of Alice's observables must be anticommuting for optimal violation, as stated in proposition \ref{prop}. Interestingly, we show that there exists a certain class of states for which our sufficient condition also becomes necessary as well in Corollaries \ref{cor1}, \ref{cor2} and \ref{cor3}. We also construct Alice and Bob's observables that achieve the sufficiency condition given in Theorem \ref{theo} for a given state (Sect.~\ref{conobs}). Subsequently, we relate our result with Horodecki's original work, demonstrating how the latter can be obtained as a special case. To conclude, we summarise our findings and provide a brief discussion about future work in Sec.~\ref{outlook}.


\section{Preliminaries}\label{pre}
We begin with some useful definitions and terminology starting with the $n$ settings of the Bell inequality first introduced in \cite{Ghorai2018}. Then, we discuss the Fano form of representing bipartite states with arbitrary local dimensions before defining the problem formally.

\subsection{A family of multi-settings Bell inequalities} In a bipartite Bell scenario, Alice performs the measurement of a set of $2^{n-1}$ dichotomic observables  $A:=\{A_{x}: x\in \{1,2,\ldots,2^{n-1}\}\}$, and Bob performs a set of $n$  dichotomic observables  $B:=\{B_y: y\in \{1,2,\ldots,n\}\}$. The measurements of both parties produce outcomes $\pm 1$. In this scenario, we consider a Bell functional as follows.
\begin{equation} \label{nbell}
\mathcal{I}_n=  \sum\limits_{y=1}^{n} \qty(\sum\limits_{x=1}^{2^{n-1}} (-1)^{z_y^{x}} A_x)  \otimes B_y 
\end{equation} 
where ${z_{y}^{x}}$ is the $(n+1-y)^{th}$ bit of the number $(x-1)$ expressed in binary. Note that for $n=2$ and $n=3$, Eq.~(\ref{nbell}) reduces to the well known CHSH inequality \cite{Clauser1969} and Gisin's elegant Bell inequality \cite{Gisin2007} respectively. The local bound of $\mathcal{I}_n$ is derived \cite{Munshi2021} as
\begin{eqnarray}
(\mathcal{I}_n)_{L}=\qty(\Big\lfloor \frac{n}{2} \Big\rfloor +1) \binom{n}{\lfloor \frac{n}{2}\rfloor +1} 
\end{eqnarray}  
with $\lfloor\cdot\rfloor$ is the floor function.

It has been demonstrated \cite{Ghorai2018} that the optimal quantum value of the Bell functional, $\mathcal{I}_n$, given by Eq.~(\ref{nbell}), is $(\mathcal{I}_n)_{\mathcal{Q}}^{opt}=2^{n-1}\sqrt{n}$. This value is attained for maximally entangled states under the condition that at least $n$ mutually anticommuting observables exist within the local Hilbert space of both parties. Hence, for a given $n$, the minimal dimension of the local Hilbert space has to be $d=2^{m}$ with $m=\lceil \frac{1}{2}(n-1) \rceil$ for maximum violation. Consequently, this feature is used in \cite{Pan2020} to certify at least $m=\lceil \frac{1}{2}(n-1) \rceil$ numbers of maximally entangled two-qubit states. 

For our purpose, we re-write the quantum value of the Bell functional in Eq.~(\ref{nbell}) in terms of the scaled observables for Alice $\mathcal{A}:=\qty{\mathcal{A}_{y}: y\in [1,2,\ldots,n]}$, defined as follows.
\begin{equation}\label{obs3}
  A'_y = \sum\limits_{x=1}^{2^{n-1}}(-1)^{z^x_y} \ A_x ; \ \ \mathcal{A}_y = \frac{A'_y}{\omega_y} ;  \ \ \omega_{y} = \sqrt{\Tr[(A_y^{'})^2\rho]}
\end{equation}
Note that the coefficients $(-1)^{z^x_y}$ appearing above can be represented by the matrix element ($\mathscr{M}_{yx}$) corresponding to $y^{th}$ row and $x^{th}$ column of a orthogonal matrix $\mathscr{M}$ defined in Appx. \ref{nproof}. Putting Eq.~(\ref{obs3}) in Eq.~(\ref{nbell}) gives
\begin{eqnarray}\label{bellvalue}
\mathcal{I}_n (\mathcal{A},B,\rho)=  \sum_{y=1}^{n}\omega_{y}\Tr\left[ (\mathcal{A}_{y}\otimes B_{y})\rho\right] 
\end{eqnarray}
We now describe the Fano form for representing quantum states which we use in the rest of the paper.


\subsection{Representation of $m$-copies of two-qubit state} \label{rots}

Any arbitrary two-qubit state $\psi \in \mathscr{L}(\mathscr{H}_A^2 \otimes \mathscr{H}_B^2)$, given in any arbitrary operator basis $\{\upsilon_i\otimes\upsilon_j'\}$, can always be diagonalised with local unitary $U \otimes V$ into the diagonal operator basis $\sigma=U \upsilon U^{\dagger}$ and $\sigma'=V\upsilon' V^{\dagger}$, and expressed as follows \cite{Horodecki1995}
\begin{equation}\label{Fanoqubit}
\psi =\frac{1}{4}\qty[\openone + \sum_{i=1}^{3} r_i \sigma_i \otimes \openone + \sum_{j=1}^{3} s_j \openone\otimes\sigma'_j +\sum_{k=1}^{3} \lambda_k \sigma_k \otimes\sigma_k']
\end{equation}
$\sigma$ and $\sigma'$ are the generators of $SU(2)$ on Alice's and Bob's subspace, respectively. Now $m$ copies of $\psi$, given by $\rho=\Motimes_m \psi \in \mathscr{L}(\mathscr{H}_A^d \otimes \mathscr{H}_B^d)$, with $d=2^m$, in general, can always be expressed in a particular basis $\{\tau,\tau'\}$ as \cite{Fano1957}
\begin{equation}\label{Fanoqudit}
\rho=\frac{1}{d^2}\qty[\openone_{d^2}+ \sum_{i=1}^{d^2 -1} p_i \tau_i \otimes \openone + \sum_{j=1}^{d^2 -1} q_j \openone\otimes\tau'_j +\sum_{u,v=1}^{d^2 -1} t_{uv}\tau_u \otimes\tau'_v] 
\end{equation}
where the chosen operator basis set $\boldsymbol{\tau}:=\{\tau_k\}_{k=1}^{d^2 -1}$ given as follows
\begin{equation} \label{taubasis}
    \tau_u = \sigma_{a_1} \otimes \sigma_{a_2} \otimes \cdots \otimes\sigma_{a_m}  \ ; \ \ \tau'_v = \sigma'_{b_1} \otimes \sigma'_{b_2} \otimes \cdots \otimes \sigma'_{b_m} \ ;
\end{equation}
where $a_k, b_l \in \{0,1,2,3\}$ with $\sigma_0 = \openone$ and $\{\sigma_1, \sigma_2, \sigma_3 \}$ are the generators of $SU(2)$. There are $(4^m -1)$ number of $m$ quart-string $(a_1, a_2,\cdots, a_m)$ if we exclude $a_i =0 \ \forall i$. For each $u \in \{1,2,\cdots d^2 -1 \}$ with $d=2^m$, $\tau_u$ corresponds to one $m$ quart-string. A similar definition also holds for $\tau'_v$. Thus, $\boldsymbol{\tau}$ are a set of traceless, dichotomic and Hermitian operators. The correlation matrix, $T_{\rho}:= [t_{uv}]$ is defined as the $(d^2 -1)\times(d^2 -1)$ matrix whose elements are given by $t_{uv}= \Tr [\tau_u \otimes \tau'_v \ \rho]$. The joint operators appearing in diagonal positions, i.e., $\{\tau_u \otimes \tau'_u \}$ form a mutually commuting set. Note that the basis set $\boldsymbol{\tau}$ contains at most $(2m+1)$ mutually anticommuting operators (see Appx.~G of \cite{Monroig2020}). In addition, there exist distinct subset $\mathcal{S}_l \subset \boldsymbol{\tau}$ of $(2m+1)$ mutually anticommuting operators. For example, when $\tau \in \mathscr{L}(\mathscr{H}^4)$, there are 6 distinct sets of 5 exhaustive mutually anticommuting observables.


\section{Statement of the problem} \label{sop}

Given $m$ copies of the two-qubit state $\psi$, we intend to identify a precise condition in terms of state parameters that determine whether the given state $\rho=\otimes_m \psi$ violates the $n$-settings Bell inequality in Eq.~(\ref{nbell}).

One strategy for formulating such a condition involves evaluating the maximum Bell value, denoted by $\mathcal{I}_n(\rho)$ corresponding to the state as a function of its parameters, by optimizing over all relevant observables. If this optimal value exceeds $(\mathcal{I}_n)_{L}$, it indicates nonlocality.

To achieve this, we solve the following optimization problem
\begin{equation}\label{obj}
\begin{aligned}
\mathcal{I}_n (\rho)&= \max_{\{\mathcal{A}, B\}} \sum_{y=1}^{n}\omega_{y}\Tr[(\mathcal{A}_{y}\otimes B_{y})\rho] \\
\textrm{subject to:} \ \ B_{y}^{2}&=\openone \ \ \ \forall y\in \{1,2,\ldots,n\}   \\
A_x^2 &= \openone \ \ \ \forall x\in \{1,2,\ldots, 2^{n-1}\}
\end{aligned}
\end{equation}
Here, the maximisation is performed over sets $A:=\{A_x\}$ and $B:=\{B_y\}$, comprising dichotomic Hermitian operators with eigenvalues $\{\pm1\}$. We denote the set of joint observables over which the optimization is performed as $\Omega$. To present our results, we introduce the function $M_n(\rho)$. 
\begin{defn}\label{basis}
Given a state $\rho=\otimes_m \psi$ in the operator basis $\boldsymbol{\tau}\otimes \boldsymbol{\tau}'$ of the from Eq.~(\ref{taubasis}), we define 
    \begin{equation}\label{mtau}
    M_n \qty(\rho)= \max_{\{\mathcal{S}^{(n)}_{l} , \mathcal{S}_{l'}\}} \sqrt{\sum_{\tau_u \in \mathcal{S}^{(n)}_{l}}\sum_{\tau'_v \in\mathcal{S}_{l'}} \braket{\tau_u\otimes \tau'_v}_{\rho}^2}
\end{equation}
\end{defn}
where $\mathcal{S}^{(n)}_{l}\subseteq \mathcal{S}_{l}$ contains $n$ mutually anticommuting operators with $n\leq (2m+1)$. The sets $S^{(n)}_l$ and $S_{l'}$ contain $n$ and $(2m+1)$ elements respectively. The maximum in Eq.~(\ref{mtau}) is taken over all possible choices of sets $\{\mathcal{S}^{(n)}_l\}_l$ and $\{\mathcal{S}_{l'}\}_{l'}$ from $\boldsymbol{\tau}$ and $\boldsymbol{\tau'}$.

Note that $m$ copies of $\psi$ may be provided in an arbitrary basis $\otimes_m \{\upsilon_i \Motimes \upsilon'_j\}$. Since $\psi$ can be diagonalized using $U \otimes U'$, the form of $\otimes_m \psi$ can always be achieved as in Eq.~(\ref{Fanoqudit}) by applying local unitary transformations by $\otimes_m U$ and $\otimes_m U'$. This ensures the given $m$ copies of the two-qubit state can be expressed in the operator basis $\boldsymbol{\tau}\otimes \boldsymbol{\tau}'$ as defined in Eq.~(\ref{taubasis}).

Our aim is to establish a relationship between the function $M_n (\rho)$ and $\mathcal{I}_n (\rho)$, enabling comparison of $M_n (\rho)$ with $(\mathcal{I}_n)_{L}$ to ascertain the nonlocality of the state.


\section{Results} \label{results}

 In this section, in Theorem \ref{theo}, we derive a sufficient condition, expressed in terms of $M_n (\rho)$, for the violation of the $n$-settings Bell inequality. We also prove the necessity of the existence of $n$ maximally incompatible measurements on Alice's Hilbert space for attaining the maximal Bell value for any given state, as shown in Proposition \ref{prop}. Subsequently, we explore the cases where $M_n (\rho)$ becomes necessary as well, as detailed in Corollaries \ref{cor1}, \ref{cor2} and \ref{cor3}. Furthermore, we provide a construction of observables that achieve the value of $M_n (\rho)$ for $n=3$ (Sec.~\ref{conobs}). Finally, we demonstrate how our construction of $M_n (\rho)$ reduces to Horodecki's definition of $M(\rho)$ for the CHSH inequality ($n=2$).

\begin{thm} \label{theo}
For a given state $\rho$, a sufficient condition for the violation of the $n$-settings Bell inequality in Eq.~(\ref{nbell}) is given by
\begin{equation}\label{ub2}
M_n(\rho) > 2^{1-n} \ (\mathcal{I}_n)_{{L}}
\end{equation}
\end{thm}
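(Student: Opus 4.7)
The plan is to prove the sufficient condition constructively: exhibit observables achieving $\mathcal{I}_n(\rho)\geq 2^{n-1} M_n(\rho)$, whence the hypothesis $M_n(\rho)>2^{1-n}(\mathcal{I}_n)_L$ immediately implies $\mathcal{I}_n(\rho)>(\mathcal{I}_n)_L$. First I would fix a pair of anticommuting subsets $\mathcal{S}^{(n)}_{l}=\{\tau_{u_1},\ldots,\tau_{u_n}\}$ and $\mathcal{S}_{l'}=\{\tau'_{v_1},\ldots,\tau'_{v_{2m+1}}\}$ that attain the maximum in Eq.~(\ref{mtau}), and let $\tilde{T}$ denote the $n\times(2m+1)$ sub-block of the correlation matrix indexed by them, so that its Frobenius norm equals $M_n(\rho)$.

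Guided by Proposition~\ref{prop}, I would look for scaled Alice operators and Bob operators supported on these subsets: $\mathcal{A}_y=\sum_{k=1}^{n}a_{yk}\,\tau_{u_k}$ and $B_y=\sum_{j=1}^{2m+1}b_{yj}\,\tau'_{v_j}$. Pairwise anticommutation of the $\tau_{u_k}$'s collapses $\mathcal{A}_y^2$ to $\|a_y\|^2\openone$, so the normalization $\Tr[\mathcal{A}_y^2\rho]=1$ fixes $\|a_y\|=1$; imposing pairwise anticommutation of the $\mathcal{A}_y$'s further reduces to the orthogonality condition $a_y\cdot a_{y'}=0$, so $[a_{yk}]$ is an $n\times n$ orthogonal matrix. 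An identical calculation yields $\|b_y\|=1$ from $B_y^2=\openone$.

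To realize $A'_y=\omega_y\mathcal{A}_y$ from honest dichotomic $A_x$'s, I would invert $A'_y=\sum_x\mathscr{M}_{yx}A_x$ using the orthogonality $\sum_x\mathscr{M}_{yx}\mathscr{M}_{y'x}=2^{n-1}\delta_{yy'}$, setting $A_x:=2^{1-n}\sum_y\mathscr{M}_{yx}\,\omega_y\,\mathcal{A}_y$. Squaring and invoking pairwise anticommutation of the $\mathcal{A}_y$'s kills every cross term, leaving $A_x^2=4^{1-n}\bigl(\sum_y\omega_y^2\bigr)\openone$, so the dichotomy constraint collapses to the single global condition $\sum_y\omega_y^2=4^{n-1}$. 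The Bell functional then becomes $\mathcal{I}_n=\sum_y\omega_y\,a_y^{T}\tilde{T}\,b_y$, and taking $a_y,b_y$ to be paired left/right singular vectors of $\tilde{T}$ with singular values $\sigma_y$, then choosing $\omega_y\propto\sigma_y$ normalized so that $\sum_y\omega_y^2=4^{n-1}$, saturates Cauchy--Schwarz and yields $\mathcal{I}_n=2^{n-1}\sqrt{\sum_y\sigma_y^2}=2^{n-1}\|\tilde{T}\|_F=2^{n-1}M_n(\rho)$.

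The main obstacle will be the realizability step above: checking that the $A_x$ built from the $\mathcal{A}_y$'s are genuine dichotomic observables. This rests on two felicitous coincidences in the computation of $A_x^2$, namely (i) the off-diagonal terms vanish because the pairwise anticommutation of the $\mathcal{A}_y$'s is paired against the coefficient $\mathscr{M}_{yx}\mathscr{M}_{y'x}$ that is symmetric in $(y,y')$, and (ii) the surviving diagonal contribution is tunable through the $\omega_y$'s to equal $\openone$. A related subtlety is that Proposition~\ref{prop} only forces the optimal $\mathcal{A}_y$'s to be anticommuting, not to lie in the span of a specific subset $\mathcal{S}^{(n)}_l$; since only a lower bound on $\mathcal{I}_n(\rho)$ is needed for sufficiency, however, restricting to this span is permitted, and maximizing the construction over the choice of subsets recovers exactly $2^{n-1}M_n(\rho)$.
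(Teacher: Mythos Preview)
Your proposal is correct and reaches the same lower bound $\mathcal{I}_n(\rho)\geq 2^{n-1}M_n(\rho)$ as the paper, via essentially the same constructive mechanism: build anticommuting $\mathcal{A}_y$'s supported on a chosen $\mathcal{S}^{(n)}_l$, invert through $\mathscr{M}$ to produce dichotomic $A_x$'s, and pick $B_y$ in the span of $\mathcal{S}_{l'}$. The paper's explicit construction in Eq.~(\ref{obsconstruct}) is precisely your recipe with the particular choice $a_y=e_y$ (so $\mathcal{A}_y=\tau_{u_y}$) and $b_y$ proportional to the $y$-th row of $\tilde T$; your SVD parametrisation is a harmless reparametrisation that lands on the same Frobenius norm $\|\tilde T\|_F=M_n(\rho)$.

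The one genuine difference is organisational. The paper's proof of Theorem~\ref{theo} routes through a chain of Cauchy--Schwarz upper bounds (Eq.~\ref{nsettingcauchy}) and Proposition~\ref{prop} to argue that the optimum lies in the set $\mathcal{K}$, then restricts further to $\mathcal{R}$ to obtain the lower bound, and finally saturates via Eq.~(\ref{obsn}). Your argument bypasses all of this: since only a lower bound on $\mathcal{I}_n(\rho)$ is required for sufficiency, Proposition~\ref{prop} is not logically needed, and you use it only as a heuristic guide. What you actually use from the appendix is much lighter---just the orthogonality of the rows of $\mathscr{M}$ and the fact that $\mathscr{M}_{yx}^2=1$, which together make your verification that $A_x^2=\openone$ immediate. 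This buys you a shorter, self-contained proof of the sufficiency direction; the paper's longer route is motivated by its further goal of analysing when the bound is also necessary (Corollaries~\ref{cor1}--\ref{cor3}), where Proposition~\ref{prop} becomes essential.
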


Eq.~(\ref{ub2}) underscores that the correlation matrix is sufficient to determine the nonlocality of a state. We later show that in some special cases, the eigenvalues of the correlation matrix provide the necessary and sufficient conditions to determine nonlocality.
 
A more practical implication of Eq.~(\ref{ub2}) is its potential to exponentially reduce the complexity of the optimization problem defined in Eq.~(\ref{obj}). It is straightforward to check that each of $\{A_x \}$ and $\{B_y\}$ contains $d^2 -d$ numbers of independent parameters with $d=2^m$ and $m=\lceil \frac{1}{2}(n-1) \rceil$. There are $2^{n-1} +n$ observables. Hence, the optimization problem involves $(d^2-d)(2^{n-1} +n) \sim \mathscr{O}(2^n)$ parameters for large $n$. On the other hand, the optimization in the definition of $M_n(\rho)$ given by Eq.~(\ref{mtau}) involves only a finite number of sets of $n$ mutually anticommuting operators, i.e., $\mathcal{S}_l^{(n)}$ and $\mathcal{S}_{l'}$. Once the state is given and expressed in the desired basis $\{\tau\otimes \tau'\}$, all sets $\mathcal{S}_l^{(n)}$ and $\mathcal{S}_{l'}$ are fixed.

\begin{proof}
Applying the Cauchy-Schwarz inequality to Eq.~(\ref{bellvalue}), we obtain
\begin{eqnarray} \label{nsettingcauchy}        
\braket{\mathcal{I}_n (\mathcal{A},B,\rho)} &=&  \sum_{y=1}^{n} \omega_y \ \Tr[\rho \ \qty(\mathcal{A}_y \otimes B_y )] \nonumber\\
        &\leq& \sqrt{\sum_{y=1}^n \omega_y^2}\sqrt{\sum_{y=1}^n \braket{\mathcal{A}_y \otimes B_y}^2} 
\end{eqnarray}
Next, we evaluate the following upper bound for the quantity $\sqrt{\sum_{y=1}^n \omega_y^2}$, which is saturated for an arbitrary state $\rho$.
\begin{proposition} \label{prop}
$\sqrt{\sum_{y=1}^n \omega_y^2} \leq 2^{n-1}$, which is saturated when Alice's scaled observables satisfy $\{\mathcal{A}_y,\mathcal{A}_{y'}\}=0 \ \forall y \neq y' \in \{1,2,...,n\}$, independent of the given state $\rho$.
\end{proposition}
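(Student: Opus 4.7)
The plan is to promote the row-orthogonality of the coefficient matrix $\mathscr{M}$ to an operator inequality on Alice's side, and then exhibit a matching construction. Since each $A'_y$ acts only on $\mathscr{H}_A$, one has $\sum_y \omega_y^2 = \Tr[S\rho]$ with $S := \sum_{y=1}^n (A'_y)^2$, so it suffices to bound the Hermitian operator $S$. Substituting the definition of $A'_y$ gives
\begin{equation*}
S \;=\; \sum_{x,x'=1}^{2^{n-1}} H_{xx'}\, A_x A_{x'}, \qquad H := \mathscr{M}^{T}\mathscr{M}.
\end{equation*}
The essential structural input is $\mathscr{M}\mathscr{M}^{T} = 2^{n-1}\,\openone_n$ (the $n$ rows of $\mathscr{M}$ are pairwise orthogonal with squared norm $2^{n-1}$), which forces the nonzero eigenvalues of $H$ to be $2^{n-1}$ with multiplicity $n$; consequently $G := 2^{n-1}\openone_{2^{n-1}} - H$ is positive semidefinite.

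I would then lift this scalar spectral fact to an operator bound. Writing $G = \sum_k u_k u_k^{T}$ with real vectors $u_k$, the identity
\begin{equation*}
\sum_{x,x'} G_{xx'}\,A_x A_{x'} \;=\; \sum_k \Bigl(\sum_{x}(u_k)_x\,A_x\Bigr)^{\!2} \;\succeq\; 0
\end{equation*}
holds because each inner combination $\sum_x (u_k)_x A_x$ is Hermitian (the $u_k$ are real and the $A_x$ are Hermitian), so its square is PSD. Combined with $\sum_x A_x^2 = 2^{n-1}\openone$ (using $A_x^2=\openone$), this yields $S \preceq 4^{n-1}\openone$; tracing against $\rho$ gives $\sum_y \omega_y^2 \leq 4^{n-1}$, i.e., $\sqrt{\sum_y \omega_y^2}\leq 2^{n-1}$.

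For the saturation clause I would give an explicit, state-independent construction. Fixing dichotomic, mutually anticommuting operators $\mathcal{A}_1,\ldots,\mathcal{A}_n$ on $\mathscr{H}_A^{d}$ (which exist whenever $d=2^m$ with $m=\lceil(n-1)/2\rceil$), I set $A_x := \tfrac{1}{\sqrt n}\sum_y \mathscr{M}_{yx}\,\mathcal{A}_y$. A direct expansion uses $\mathscr{M}_{yx}^2=1$ together with $\{\mathcal{A}_y,\mathcal{A}_{y'}\}=0$ for $y\neq y'$ to give $A_x^2 = \openone$, so the $A_x$ are legitimate dichotomic observables. Row orthogonality $\sum_x \mathscr{M}_{yx}\mathscr{M}_{y'x} = 2^{n-1}\delta_{yy'}$ then collapses the defining transform to $A'_y = \tfrac{2^{n-1}}{\sqrt n}\,\mathcal{A}_y$, so $\omega_y^2 = 4^{n-1}/n$ for every $\rho$ and $\sum_y\omega_y^2 = 4^{n-1}$. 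Moreover, the scaled observables from Eq.~(\ref{obs3}) coincide with the original $\mathcal{A}_y$ in this construction, so the bound is saturated precisely under the stated pairwise anticommutation condition, independent of $\rho$.

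The main obstacle is converting the scalar spectral inequality $H \preceq 2^{n-1}\openone_{2^{n-1}}$ into an operator inequality valid for the generally non-commuting $A_x$; a term-by-term Cauchy--Schwarz estimate on the $\omega_y$ alone would ignore the row orthogonality of $\mathscr{M}$ and fail to produce the tight constant $2^{n-1}$. The sum-of-squares rewriting via a real factorization of $G$ is the cleanest way to retain this information, and all state dependence is deferred to the single trace at the end, which is why the saturating construction is manifestly $\rho$-independent.
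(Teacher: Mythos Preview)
Your upper-bound argument is correct and is, in essence, the paper's argument in a different guise. The paper completes the $n\times 2^{n-1}$ coefficient matrix $M$ to a full $2^{n-1}\times 2^{n-1}$ orthogonal matrix and introduces dummy operators $A'_{n+1},\ldots,A'_{2^{n-1}}$; your PSD complement $G=2^{n-1}\openone-H$ factors exactly as $(M')^{T}M'$ where $M'$ carries those extra rows, and your sum-of-squares $\sum_k\bigl(\sum_x(u_k)_xA_x\bigr)^2$ is nothing but the paper's $\sum_{y>n}(A'_y)^2$. So for the inequality itself the two routes coincide.

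The substantive divergence is in the saturation clause, and here there is a genuine gap. You exhibit a construction (anticommuting $\mathcal{A}_y$, then $A_x=\tfrac{1}{\sqrt n}\sum_y\mathscr{M}_{yx}\mathcal{A}_y$) and verify that it saturates; this shows anticommutation is \emph{sufficient}. The paper proves the converse: saturation forces $\Tr[\rho\,(A'_y)^2]=0$ for all $y>n$, hence $A'_y=0$ for $y>n$, and then the constraints $A_x^2=\openone$ (rewritten via $A_x=2^{1-n}\sum_{y\le n}\mathscr{M}_{xy}^{\top}A'_y$) yield a homogeneous linear system in the anticommutators $\{A'_y,A'_{y'}\}$ whose coefficient matrix is shown, by a nontrivial rank computation, to be full rank---so all anticommutators must vanish. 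That \emph{necessity} direction is the main technical content of Appendix~\ref{parity obliv} and is exactly what Theorem~\ref{theo} invokes to restrict the optimization over Alice's observables to the anticommuting sets $\mathcal{S}^{(n)}_l$. Your use of ``precisely'' in the final paragraph is therefore not supported by what you have written: a single saturating construction does not rule out other saturating choices of $\{A_x\}$ whose induced $\mathcal{A}_y$ fail to anticommute, and ruling those out is where the real work lies.
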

The detailed proof is extensive and, therefore, deferred to Appx. \ref{parity obliv}. Here we explain the logical reasoning to construct the proof. We make use of the fact that within the definition of $A':=\{A'_y\}$ in Eq.~(\ref{obs3}), coefficients of $A_x$ can be arranged into rows of an orthogonal matrix $\mathscr{M}$ (see Appx. \ref{nproof}). Concretely, we show that the transformation from $A$ to $A'$ is governed by an orthogonal matrix $\mathscr{M}$ constructed from the scalar elements $(-1)^{z^{x}_{y}}$ defined in Eq.~(\ref{nbell}). Using the orthogonality of $\mathscr{M}$ and $A_{x}^{2}=\openone$, we demonstrate that the maximum value of $\sqrt{\sum_{y=1}^n \omega_y^2}$ is attained when a specific set of state-independent constraints (given by Eq.~\ref{alicecondn} in Appx. \ref{nproof}) are adhered to by Alice's set of observables $A$. These constraints consequently lead to the anticommutation relations $\{\mathcal{A}_y,\mathcal{A}_{y'}\}=0$ which then implies $\max_{\mathcal{A}} \sqrt{\sum_{y=1}^n \omega_y^2}=2^{n-1}$ as shown in Appx. \ref{nproof}.

One consequence of the anticommutation relations $\{\mathcal{A}_y,\mathcal{A}_{y'}\}=0$ is that $\mathcal{A}$ must be traceless, as can be shown. Multiplying the anticommutation relation on both sides by $\mathcal{A}_y^{-1}$, we obtain $\mathcal{A}_y^{-1} \{\mathcal{A}_y,\mathcal{A}_{y'}\}= \mathcal{A}_{y'} + \mathcal{A}_y^{-1} \mathcal{A}_{y'} \mathcal{A}_y$. Taking trace on both sides, we get $\Tr[\mathcal{A}_y]=0 \ \forall y$, entailing $\Tr[A_x]=0 \ \forall x$. 

The necessity of characterizing $\mathcal{A}$ as the set of $n$ pairwise anticommuting traceless operators enables us to map them to one of the anticommuting subsets $\mathcal{S}^{(n)}_l$ of the basis $\boldsymbol{\tau}$ introduced in Eq.~(\ref{taubasis}) with $\tau_u\equiv\mathcal{A}_u \ \forall \tau_u \in \mathcal{S}^{(n)}_l$ for some $l$. A significant consequence is that, since Alice's observables must be linear combinations of anticommuting observables to attain the maximal value, this finding reduces the joint observable search space parameters to a strict subset $\mathcal{K} \subset \Omega$. Consequently, the optimization problem of Eq.~(\ref{obj}) reduces to the following optimization problem.
\begin{equation} \label{objred}
\max_{\{\mathcal{A}, B\}} \sum_{y=1}^{n}\omega_{y}\Tr\left[ (\mathcal{A}_{y}\otimes B_{y})\rho\right]  \to 2^{n-1}\max_{\mathcal{S}^{(n)}_{l} ,\{B\}}   \sqrt{\sum_{\tau_y\in \mathcal{S}^{(n)}_{l}} \braket{\tau_{y}\otimes B_{y}}_{\rho}^{2}}   
\end{equation}
The maximization over the subsets $\mathcal{S}^{(n)}_{l}$ remains because there are a finite number of subsets and we have to take into account every possible choice of $\mathcal{S}^{(n)}_{l}$ that is fixed for a given state. 

Now we consider $B_{y}$ as a linear combination of anticommuting basis elements $\tau'_v\in\boldsymbol{\tau}'$ as $B_{y}=\sum_{v\in\mathcal{S}_{l}} b_{yv}\tau'_{v}$ with $\sum_{v}b_{yv}^{2}=1$. This consideration ensures $B_{y}^{2}=\openone$. Subsequently, following a similar argument used for $\mathcal{A}$, $B_y$ becomes traceless. It is worth noting that the complete space of dichotomic observables may contain operators that cannot be represented as a linear combination of pairwise anticommuting operators. Since this way of expressing $B_y$ only allows optimization over a subset of dichotomic observables on Bob's side, this assumption further reduces the joint observable search space to $\mathcal{R}\subset \mathcal{K}$ and yields a tight lower bound of the optimization of RHS of Eq.~(\ref{objred}).
\begin{equation} \label{objred3}
\max_{\mathcal{S}^{(n)}_{l},\{B\}}   \sqrt{\sum_{\tau_y\in \mathcal{S}^{(n)}_{l}} \braket{\tau_{y}\otimes B_{y}}_{\rho}^{2}} 
 \geq \max_{\mathcal{S}^{(n)}_{l},\mathcal{S}_{l'}} \sqrt{\sum_{\tau_y\in \mathcal{S}^{(n)}_{l}} \qty( \sum_{\tau'_v\in\mathcal{S}_{l'}}b_{yv} \ \braket{\tau_{y}\otimes\tau'_{v}}_{\rho})^{2}}
\end{equation}
The bound is tight because it can be saturated for some classes of states which we will see later. To evaluate the upper bound of RHS of Eq.~(\ref{objred3}), we apply the Cauchy-Schwartz inequality and using $\sum_{\nu}b_{y\nu}^{2}=1$, we obtain the upper bound of the quantity $\sum_{\tau_y\in \mathcal{S}^{(n)}_{l}} \qty( \sum_{\tau'_v\in\mathcal{S}_{l'}}b_{yv} \ \braket{\tau_{y}\otimes\tau'_{v}}_{\rho})^{2}$ as follows
\begin{equation}\label{cs11}
 \sqrt{\sum_{\tau_y\in \mathcal{S}^{(n)}_{l}} \qty( \sum_{\tau'_v\in\mathcal{S}_{l'}}b_{yv} \braket{\tau_{y}\otimes\tau'_{v}}_{\rho})^{2}} \leq \sqrt{\sum_{\substack{\tau'_v\in\mathcal{S}_{l'} \\ \tau_y\in \mathcal{S}^{(n)}_{l}}} \braket{\tau_{y}\otimes\tau'_{v}}_{\rho}^2 } 
\end{equation}
Above inequality in Eq.~(\ref{cs11}) saturates for the following condition on the parameter $b_{yv}$
 \begin{eqnarray}\label{obsn}
b_{yv}=\frac{\braket{\tau_{y}\otimes\tau'_{v}}_{\rho}}{\sqrt{\sum_{\tau'_v\in\mathcal{S}_{l'}}\braket{\tau_{y}\otimes\tau'_{v}}_{\rho}^{2}}}
 \end{eqnarray}

Using Eq.~(\ref{cs11}), we obtain an upper bound of RHS of Eq.~(\ref{objred3}) as
\begin{equation}\label{newbound}
    \max_{\mathcal{S}^{(n)}_{l}  \mathcal{S}_{l'} }\sqrt{\sum_{\tau_y\in \mathcal{S}^{(n)}_{l}} \qty( \sum_{\tau'_v\in\mathcal{S}_{l'}}b_{yv} \ \braket{\tau_{y}\otimes\tau'_{v}}_{\rho})^{2}} \leq \max_{\mathcal{S}^{(n)}_{l}\mathcal{S}_{l'}}\sqrt{\sum_{\substack{\tau'_v\in\mathcal{S}_{l'} \\ \tau_y\in \mathcal{S}^{(n)}_{l}}} \braket{\tau_{y}\otimes\tau'_{v}}_{\rho}^2 }
\end{equation}

\begin{figure}[!ht] 
\includegraphics[scale=0.47]{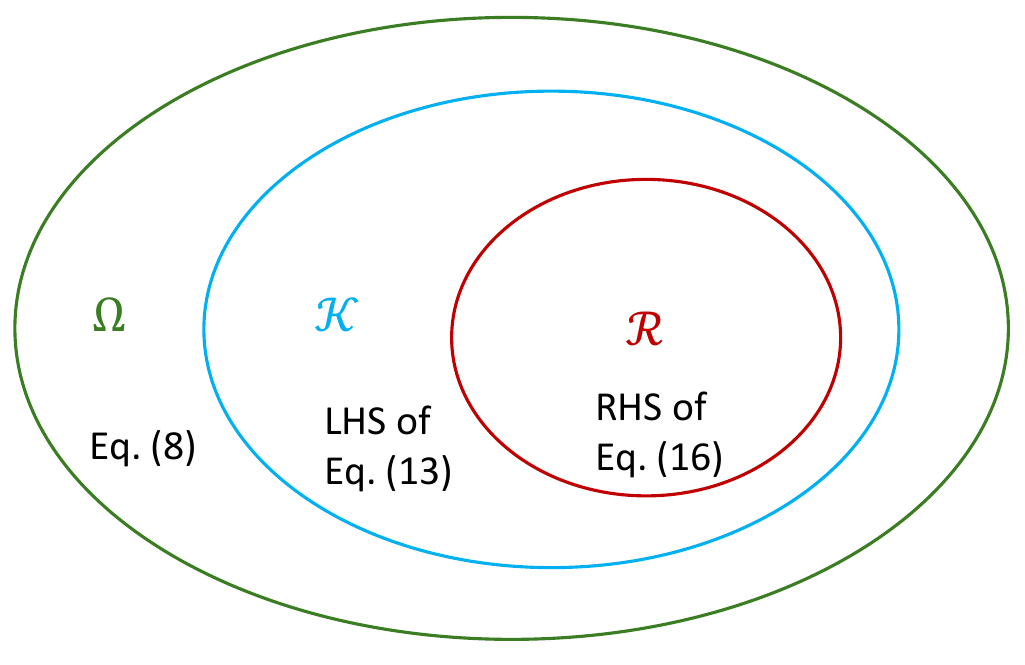}
\caption{$\Omega\equiv$ The set of joint observables $(\boldsymbol{\mathcal{A}}\otimes B)$ that evaluates $\mathcal{I}_n(\rho)$ in Eq.~(\ref{obj}). $\mathcal{K}\equiv$ The set of joint observables in which Alice's observables are linear combinations of mutually anticommuting operators. The solution of LHS of Eq.~(\ref{objred3}) comes from this set of observables. $\mathcal{R}\equiv$ The set of joint-observables where both Alice's and Bob's observables are expressed in linear combinations of anticommuting operators. This set of observables provides the solution RHS of Eq.~(\ref{newbound}). Note that the value of $\mathcal{I}_n(\rho)$, optimized over the set $\mathcal{R}$ will always be less than or equal to the value evaluated optimized over the set $\mathcal{K}$. Proposition \ref{prop} proves that the solution of the optimization problem in Eq.~(\ref{obj}) comes from $\mathcal{K}$.}
\label{obsdiaset}
\end{figure}

Here, we emphasize that the upper bound of RHS of Eq.~(\ref{objred3}) is achieved by invoking the Cauchy-Schwartz inequality, which implies that the optimal value of RHS of Eq.~(\ref{newbound}) is attained by observables in $\mathcal{R}\subset \mathcal{K}$. Since Proposition \ref{prop} guarantees that the value of $\mathcal{I}_n (\rho)$ (LHS of Eq.~\ref{objred3}) must be obtained from the set $\mathcal{K}$, the maximum value of RHS of Eq.~(\ref{newbound}) is lower than that of LHS of Eq.~(\ref{objred3}). This characteristic is captured by the following relations (see Fig. \ref{obsdiaset}).

\begin{eqnarray}
 \mathcal{I}_n(\rho) &\equiv& 2^{n-1}\max_{\mathcal{S}^{(n)}_{l} ,\{B\}}   \sqrt{\sum_{\tau_y\in \mathcal{S}^{(n)}_{l}} \braket{\tau_{y}\otimes B_{y}}_{\rho}^{2}} \nonumber \\
 &\geq & \ 2^{n-1} \max_{\mathcal{S}^{(n)}_{l},\mathcal{S}_{l'}}  \sqrt{\sum_{\tau_y\in \mathcal{S}^{(n)}_{l},\tau'_v\in\mathcal{S}_{l'}}\braket{\tau_{y}\otimes\tau'_{v}}_{\rho}^2 } \nonumber \\
 &=& \ 2^{n-1}  M_n(\rho)  \label{cs2}
\end{eqnarray}
\end{proof}

From Eq.~(\ref{cs2}), we observe that $ M_n (\rho)$ is upper bounded by $2^{1-n} \mathcal{I}_n(\rho)$. This implies that if $ M_n (\rho)>2^{1-n}(\mathcal{I}_n)_{L}$ the given state $\rho$ is nonlocal, thereby establishing $M_n(\rho)$ as a sufficient criterion. Now, let us analyse under what condition $M_n (\rho)$ becomes both a necessary and sufficient condition, that is, when $M_n(\rho)=2^{1-n}\mathcal{I}_n(\rho)$.  The following Corollary brings out the reasoning for the consideration of the basis $\boldsymbol{\tau}\otimes\boldsymbol{\tau}'$, which saturates Eq.~(\ref{cs2}).

\begin{cor}\label{cor1}
For a decomposable Bell diagonal state\footnote{$\rho=\psi^{\otimes m}_D = \frac{1}{2^{2m}} \qty[\openone\otimes\openone + \sum_{i=1}^3 \lambda_i \ \qty(\sigma_i \otimes \sigma'_i)]^{\Motimes_m}$}, denoted by $\rho_{BD}=\otimes_m\psi_{BD}$, the inequality in Eq.~(\ref{cs2}) saturates, i.e., $M_n(\rho_{BD})=2^{1-n}\mathcal{I}_n(\rho_{BD})$. In this case, the necessary and sufficient condition for violating the $n$-settings Bell inequality in Eq.~(\ref{nbell}) is given by
\begin{equation}\label{belldiag}
M_n(\rho_{BD})=\max_{\mathcal{S}^{(n)}_l} \sqrt{\sum_{i=1}^n \mu_i} > 2^{1-n}(\mathcal{I}_n)_{L}
\end{equation}
with Bob's observables are mutually anticommuting, and the basis set $\boldsymbol{\tau}\otimes\boldsymbol{\tau}'$ becomes the diagonal operator basis of $\rho_{BD}$. $\{\mu_i\}$ are those eigenvalues out of $(d^2-1)$ eigenvalues of $T^{\top}_{\rho_{BD}} \cdot T_{\rho_{BD}}$, such that $\mu_i =\braket{\tau_i \otimes\tau'_i}^2_{\rho_{BD}}$ with $\tau_i, \tau'_i \in \mathcal{S}_l^{(n)}$.   
\end{cor}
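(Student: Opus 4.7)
The plan is to exploit the tensor-product structure of $\rho_{BD}=\otimes_m\psi_{BD}$ in the basis $\boldsymbol{\tau}\otimes\boldsymbol{\tau}'$ aligned with the diagonal basis of $\psi_{BD}$. The first step is a direct computation
\begin{equation*}
\braket{\tau_u\otimes\tau'_v}_{\rho_{BD}}=\prod_{k=1}^{m}\braket{\sigma_{a_k}\otimes\sigma'_{b_k}}_{\psi_{BD}},
\end{equation*}
combined with the observation that $\psi_{BD}$ carries only the correlations $\braket{\sigma_i\otimes\sigma'_i}=\lambda_i$ and no local or off-diagonal contributions. Each factor then vanishes unless $a_k=b_k$ for every $k$, so $T_{\rho_{BD}}$ is diagonal in $\boldsymbol{\tau}\otimes\boldsymbol{\tau}'$, $T^{\top}_{\rho_{BD}}T_{\rho_{BD}}$ is diagonal as well, and its eigenvalues read off as $\mu_u=\braket{\tau_u\otimes\tau'_u}^{2}_{\rho_{BD}}$.

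With $T$ diagonal, the double sum inside the definition of $M_n(\rho_{BD})$ in Eq.~(\ref{mtau}) retains only $u=v$ terms, and the maximum over $\mathcal{S}_{l'}$ is realised by taking $\mathcal{S}_{l'}$ to be the primed image of the chosen $\mathcal{S}^{(n)}_l$---a set that is itself mutually anticommuting by the mirror symmetry between Alice's and Bob's Pauli-string bases. This collapses the definition to $M_n(\rho_{BD})=\max_{\mathcal{S}^{(n)}_l}\sqrt{\sum_{i=1}^{n}\mu_i}$, matching the form stated in the corollary.

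For the saturation claim $\mathcal{I}_n(\rho_{BD})=2^{n-1}M_n(\rho_{BD})$, I would show that the $\geq$ step inside Eq.~(\ref{cs2}) becomes an equality on $\rho_{BD}$. Starting from the first line of Eq.~(\ref{cs2}), which is an equality supplied by Proposition~\ref{prop}, it suffices to prove that for any fixed $\mathcal{S}^{(n)}_l$ the maximum of $\sqrt{\sum_y\braket{\tau_y\otimes B_y}^{2}_{\rho_{BD}}}$ over dichotomic $B_y$ is attained at $B_y=\pm\tau'_y$. Expanding a dichotomic $B_y=\beta_0\openone+\sum_v\beta_{yv}\tau'_v$ with $\beta_0^{2}+\sum_v\beta_{yv}^{2}=1$ (from $\mathrm{Tr}(B_y^{2})=d$), and noting that the maximally mixed reduction $\mathrm{Tr}_B\rho_{BD}=\openone/d$ kills the $\beta_0\openone$ contribution, the diagonality of $T_{\rho_{BD}}$ yields $\braket{\tau_y\otimes B_y}_{\rho_{BD}}=\beta_{yy}\braket{\tau_y\otimes\tau'_y}_{\rho_{BD}}$, whose modulus is maximised at $\beta_{yy}=\pm 1$, i.e., $B_y=\pm\tau'_y$. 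These choices form a mutually anticommuting set, confirming the explicit claim. Summing in quadrature and maximising over $\mathcal{S}^{(n)}_l$ then gives $\mathcal{I}_n(\rho_{BD})=2^{n-1}M_n(\rho_{BD})$, and the necessary-and-sufficient threshold $M_n(\rho_{BD})>2^{1-n}(\mathcal{I}_n)_{L}$ follows at once. The main difficulty I anticipate is a careful justification of the first-line equality of Eq.~(\ref{cs2})---in particular, absorbing any local Clifford-unitary freedom in Alice's anticommuting observables into the outer maximisation over the basis subsets $\mathcal{S}^{(n)}_l$---which turns out clean for $\rho_{BD}$ thanks to its maximally mixed marginals and the resulting orthogonal invariance under such rotations.
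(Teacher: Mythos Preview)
Your proposal is correct and reaches the same conclusion as the paper, but by a somewhat different route on Bob's side. The paper's proof observes that the diagonality of $T_{\rho_{BD}}$ forces $\mathcal{S}^{(n)}_l=\mathcal{S}_{l'}$ and then invokes the eigenvalue interlacing theorem to argue that no other operator basis can exceed the value obtained from the diagonal one, thereby placing the optimum inside $\mathcal{R}$. You instead expand an arbitrary dichotomic $B_y$ in the full Pauli-string basis, use the trace constraint $\Tr(B_y^2)=d$ together with the maximally mixed marginals and diagonal $T$ to isolate the single coefficient $\beta_{yy}$, and optimise directly to $B_y=\pm\tau'_y$. Your argument is more elementary and self-contained; the paper's is shorter but leans on an external theorem whose connection to the problem is left implicit.

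Your flagged concern about the first-line identity in Eq.~(\ref{cs2})---whether an arbitrary anticommuting set $\{\mathcal{A}_y\}$ on Alice's side can be absorbed into the discrete maximisation over the Pauli-string subsets $\mathcal{S}^{(n)}_l$---is a genuine subtlety. The paper does not revisit it in the proof of the Corollary; it is treated as already settled in the proof of Theorem~\ref{theo}, where the anticommuting scaled observables are simply ``mapped'' to some $\mathcal{S}^{(n)}_l$. Your sketch (absorb the local Clifford freedom using the maximally mixed marginals of $\rho_{BD}$) is the right idea and is, if anything, more explicit than what the paper supplies for this step.
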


\begin{proof}
For Bell diagonal states, the non-zero contributions to $M_n(\rho)$ come from the diagonal elements. This makes sure that the sets $\mathcal{S}^{(n)}_l$ and $S_{l'}$ are formed with operators corresponding to diagonal positions, implying $\mathcal{S}^{(n)}_l=S_{l'}$. Using Proposition~\ref{prop}, if we map $\mathcal{A}$ in $\mathcal{S}^{(n)}_l$, it automatically results in the mapping $B\in S_{l'}$. The eigenvalue interlacing Theorem\footnote{Eigenvalue interlacing Theorem \cite{Hwang2004} states that eigenvalues ($\{\gamma_i : \gamma_1 \leq \gamma_2 ...\leq \gamma_n\}$) of a $n\times n$ sub-block, within a larger $d\times d$ symmetric matrix $M$ with eigenvalues $\{\mu_i : \mu_1 \leq \mu_2 ...\leq \mu_d\}$, are always upper bounded by $\gamma_i \leq \mu_{d-n+i}$. This upper bound is saturated if $M$ is diagonal.} \cite{Hwang2004} corroborates that no other choice of basis will exceed the Bell value obtained using the diagonal basis. Hence, it is ensured that the optimal Bell value is achieved for the joint observables in $\mathcal{R}$, and consequently, the optimal Bell value becomes a function of the singular values of the correlation matrix of $\rho$ as 
\begin{equation}
    \mathcal{I}_n(\rho)=2^{n-1}M_n(\rho)=2^{n-1} \max_{\mathcal{S}^{(n)}_l} \sqrt{\sum_{i=1}^n \mu_i}
\end{equation}
\end{proof} 

Up to this point, we have limited our analysis to cases where the given two-qudit state is decomposable into $m$ copies of a two-qubit state as $\rho=\otimes_m\psi$. However, it has recently been pointed out that there exists a class of two-qudit states that cannot be decomposed into such a form \cite{Kraft2018}. Let us now consider this class of two-qudit states, denoted by $\tilde{\rho}\neq \otimes_m \psi$.

An arbitrary two-qudit state $\tilde{\rho}\in\mathscr{L}(\mathscr{H}_A^d \otimes \mathscr{H}_B^d)$ with $d=2^m$, without loss of generality, can be expressed by Eq.~(\ref{Fanoqubit}). However, this choice of basis set $\boldsymbol{\tau} \otimes \boldsymbol{\tau}'$ to express $\tilde{\rho}$ is not unique, unlike the case when we considered the state to be decomposable. To consider an unique basis set for representing a general $\tilde{\rho}$, one can consider its diagonal operator basis, denoted by $\boldsymbol{\frak{t}}\otimes\boldsymbol{\frak{t}}' \equiv \{\frak{t}_k\otimes\frak{t}'_k\}$. This set of bases is obtained by applying two local rotation matrices $R, S \in SO(d^2 -1)$ to diagonalise the correlation matrix $T$ as $T=RT_{D}S^{-1}$, which do not necessarily preserve the algebra of $\boldsymbol{\tau}$, consequently, the basis $\boldsymbol{\frak{t}}$ lacks the properties ensured by Eq.~(\ref{taubasis}).

Now, a necessary and sufficient criterion can be developed only if Alice's and Bob's observables corresponds to the diagonal operator basis, satisfying the constraints $A_{x}^{2}=B_{y}^{2}=\openone$ and  $\{\mathcal{A}_y,\mathcal{A}_{y'}\}=0 \ \forall y \neq y' \in \{1,2,...,n\}$, the condition from Proposition \ref{prop}. However, since rotation operation does not necessarily preserve anticommutation relations or dichotomicity, it is not always possible to map $\mathcal{A}\otimes B$ to the set of basses $\boldsymbol{\frak{t}}\otimes\boldsymbol{\frak{t}}'$, leading to the impracticability of deriving a necessary and sufficient condition.

On the other hand, both the dichotomicity and anticommutation relations are preserved if the transformation is local unitary in $ SU(d)$. For this reason, we resort to those states that are diagonalizable using local unitary operations. 

\begin{cor} \label{cor2}
    For a given two-qudit state $\tilde{\rho}\neq \otimes_m \psi$ of local dimension $d=2^m$, whose $T_{\tilde{\rho}}$ is diagonalizable with local unitary operations, the necessary and sufficient condition for violating the $n$-settings Bell inequality in Eq.~(\ref{nbell}), is given by
    \begin{equation}
        \mathcal{M}_n(\tilde{\rho})=\max_{\mathcal{S}^{(n)}_l} \sqrt{\sum_{k=1}^n \frak{u}_k} > 2^{1-n} \mathcal{I}_n(\tilde{\rho})
    \end{equation}
where $\{\frak{u}_k\}$ are singular values of $T_{\tilde{\rho}}$ corresponding to the diagonal operator basis $\{\frak{t}_k\otimes\frak{t}'_k\}$. 
\end{cor}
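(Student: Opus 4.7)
The plan is to reduce Corollary \ref{cor2} to Corollary \ref{cor1} by exploiting the local unitary freedom. Theorem \ref{theo} already gives the sufficiency direction $\mathcal{I}_n(\tilde{\rho}) \geq 2^{n-1}\mathcal{M}_n(\tilde{\rho})$ for every state, so the task is to establish equality under the additional hypothesis that $T_{\tilde{\rho}}$ can be brought to diagonal form by local unitaries.

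First I would unpack the hypothesis: by assumption there exist $U_A,U_B\in SU(d)$ such that $\tilde{\rho}':=(U_A\otimes U_B)\,\tilde{\rho}\,(U_A^{\dagger}\otimes U_B^{\dagger})$ has a correlation matrix diagonal in the basis $\boldsymbol{\tau}\otimes\boldsymbol{\tau}'$, with diagonal entries equal to the singular values $\{\mathfrak{u}_k\}$ of $T_{\tilde{\rho}}$. Next I would observe that the Bell functional $\mathcal{I}_n$ is invariant under local unitary changes of the state absorbed into the measurement operators: since conjugation by $U_A$ (resp.\ $U_B$) is an automorphism of the algebra of dichotomic Hermitian operators on $\mathscr{H}^d$, the optimisation domain in Eq.~(\ref{obj}) is preserved, so $\mathcal{I}_n(\tilde{\rho})=\mathcal{I}_n(\tilde{\rho}')$. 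The same automorphism preserves traces and anticommutators, hence it maps any maximally anticommuting subset $\mathcal{S}_l\subset\boldsymbol{\tau}$ to a maximally anticommuting subset in the conjugated basis, giving $\mathcal{M}_n(\tilde{\rho})=M_n(\tilde{\rho}')$.

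Having reduced the problem to the rotated state $\tilde{\rho}'$, I would apply the Bell-diagonal argument of Corollary \ref{cor1} verbatim. Because $\langle\tau_u\otimes\tau'_v\rangle_{\tilde{\rho}'}=\mathfrak{u}_u\,\delta_{uv}$, the only surviving terms inside $M_n(\tilde{\rho}')$ come from the diagonal pairs $(\tau_u,\tau'_u)$, which forces the optimal anticommuting subsets to satisfy $\mathcal{S}_l^{(n)}=\mathcal{S}_{l'}$. Proposition \ref{prop} then pins Alice's optimal observables to a linear combination of elements of $\mathcal{S}_l^{(n)}$, and by the same Cauchy--Schwartz saturation used in Eq.~(\ref{cs11}) Bob's optimum lies in $\mathcal{R}$, so the lower bound in Eq.~(\ref{objred3}) becomes tight; the eigenvalue interlacing theorem rules out any gain from non-diagonal basis choices. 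Chaining these equalities yields $\mathcal{I}_n(\tilde{\rho})=\mathcal{I}_n(\tilde{\rho}')=2^{n-1}M_n(\tilde{\rho}')=2^{n-1}\mathcal{M}_n(\tilde{\rho})$, which is exactly the necessary and sufficient statement.

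The main obstacle, and the reason the local unitary hypothesis is indispensable, is ensuring that the diagonalising transformation preserves both the dichotomy constraint $A_x^2=B_y^2=\openone$ and the maximally anticommuting subset structure of $\boldsymbol{\tau}\otimes\boldsymbol{\tau}'$. A generic $R\otimes S\in SO(d^2-1)\otimes SO(d^2-1)$ implementing the singular value decomposition of $T_{\tilde{\rho}}$ does not arise from a pair of local unitaries, and hence would destroy the algebraic structure that makes Proposition \ref{prop} applicable to the diagonal basis. Restricting to states whose correlation matrix is diagonalisable by $SU(d)\otimes SU(d)$ is precisely what converts the unitary equivalence used above into a genuine reduction to the Bell-diagonal case; verifying this preservation explicitly is the only nontrivial step, with everything else proceeding by quoting Theorem \ref{theo} and Corollary \ref{cor1}.
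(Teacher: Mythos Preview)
Your proposal is correct and follows essentially the same approach as the paper: both exploit that local unitaries preserve the dichotomy and anticommutation structure of the operator basis, reduce to the diagonal-correlation-matrix situation, and then invoke Proposition~\ref{prop} together with the eigenvalue interlacing theorem exactly as in Corollary~\ref{cor1}. The only cosmetic difference is that you rotate the state and work in the fixed $\boldsymbol{\tau}\otimes\boldsymbol{\tau}'$ basis, whereas the paper keeps the state fixed and works in the rotated (entangled) diagonal basis $\boldsymbol{\mathfrak{t}}\otimes\boldsymbol{\mathfrak{t}}'$; these are dual descriptions of the same argument.
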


\begin{proof}
The proof is similar to that of Theorem \ref{theo} and Corollary \ref{cor1}. The crucial difference in this case is that when the state is not decomposable, the diagonal operator basis will not generally be in the form of Eq.~(\ref{taubasis}). In fact, the diagonal operator basis ($\boldsymbol{\frak{t}}$, $\boldsymbol{\frak{t}}'$) pertaining to each party transforms into an entangled basis. Since we have obtained this basis by applying unitary operations on the basis in Eq. (\ref{taubasis}), the algebra of both sets of bases is the same. Similarly to the case in Corollary \ref{cor1}, if we take Alice's observables as $\mathcal{A}_y=\frak{t}_k$, Bob's observables are fixed as $B_y=\frak{t}'_y$. Eigenvalue interlacing theorem guarantees that the optimal value of the Bell functional will be achieved from the set $\mathcal{R}$, ensuring optimal Bell value is always evaluated from the function $\mathcal{M}_n(\rho)=\max_{\mathcal{S}^{(n)}_l}\sqrt{\sum_k \frak{u}_k}$. 
\end{proof}


We now study the case for $n=3$, for which $M_n (\rho)$ is independent of the choice of basis $\{\sigma_i \otimes \sigma'_j \}$. This is because the set $\mathcal{S}^{(n)}_l$ exhausts the operator basis for a two-qubit state. This enables $M_n (\rho)$ to saturate $\mathcal{I}_n (\rho)$ for all two-qubit states as stated in the following corollary. 
\begin{cor}\label{cor3}
For any given two-qubit state $\rho$, the necessary and sufficient condition for violating the Bell inequality for $n=3$ in Eq.~(\ref{nbell}), is given by
      \begin{equation} \label{3bellm}
          M_3(\rho)= \sqrt{\sum_{i=1}^3 \mu_i} > \frac{2}{3}
      \end{equation}
      where $\{\mu_i\}$ are the eigenvalues of $T_{\rho} \cdot T^{\top}_{\rho}$.
 \end{cor}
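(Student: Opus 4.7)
The plan is to recognize that Corollary~\ref{cor3} is the $n=3$, $m=1$ specialization of the machinery already built, in which every maximization and every slack in the chain of inequalities collapses. First I would unpack the definition of $M_3(\rho)$ in this setting. For $m=1$, $d=2$, the operator basis $\boldsymbol{\tau}$ of Eq.~(\ref{taubasis}) consists of exactly $d^2-1=3$ elements, the three Pauli matrices, and these are already pairwise anticommuting. Since $n=3$ coincides with both the total basis size and the maximal mutually anticommuting count $(2m+1)=3$, the anticommuting subset $\mathcal{S}^{(n)}_{l}$ must equal all of $\boldsymbol{\tau}$, and similarly $\mathcal{S}_{l'}=\boldsymbol{\tau}'$. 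The outer maximization in Eq.~(\ref{mtau}) is therefore trivial, yielding $M_3(\rho)=\sqrt{\sum_{u,v=1}^{3}\braket{\tau_u\otimes \tau'_v}_\rho^{2}}=\sqrt{\Tr(T_\rho^\top T_\rho)}=\sqrt{\sum_{i=1}^3 \mu_i}$ with $\mu_i$ the eigenvalues of $T_\rho T_\rho^\top$.

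Next I would upgrade sufficiency (given by Theorem~\ref{theo}) to necessity by showing that the two inequalities that produce slack in the general proof, namely Eq.~(\ref{objred3}) and Eq.~(\ref{newbound}), both saturate here. Proposition~\ref{prop} forces Alice's scaled observables $\{\mathcal{A}_y\}$ to be a triple of pairwise anticommuting, traceless, dichotomic qubit operators, and any such triple is unitarily equivalent to $\{\sigma_1,\sigma_2,\sigma_3\}$ via the $SU(2)\cong SO(3)$ correspondence on the Bloch sphere. Consequently, the search over Alice's observables is exactly a search over $SO(3)$ rotations of $\boldsymbol{\sigma}$, and by Eq.~(\ref{obsn}) Bob's optimal observables are dictated by Alice's up to another such rotation. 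Since optimal observables are traceless, the Fano terms involving $\vec{r}$ and $\vec{s}$ in Eq.~(\ref{Fanoqubit}) drop out of every $\braket{\mathcal{A}_y\otimes B_y}_\rho$, so the Bell value depends only on the correlation matrix $T_\rho$, just as in the Bell-diagonal case of Corollary~\ref{cor1}.

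The final step is a diagonalization argument. I would write $T_\rho = R\,T_D\,S^\top$ with $R,S\in SO(3)$ and $T_D=\mathrm{diag}(\lambda_1,\lambda_2,\lambda_3)$, $\lambda_i^2=\mu_i$. Choosing Alice's scaled observables along the rows of $R$ and Bob's along the rows of $S$ realizes $\mathcal{A}_y\otimes B_y$ in the joint set $\mathcal{R}$ of Fig.~\ref{obsdiaset}, and the eigenvalue-interlacing argument from Corollary~\ref{cor1} (applied to the $3\times 3$ symmetric matrix $T_\rho^\top T_\rho$) guarantees that no other triple of anticommuting observables on either side yields a larger Bell value. This forces $\mathcal{I}_3(\rho)=2^{n-1}M_3(\rho)=4\sqrt{\sum_i \mu_i}$, after which comparing with the local bound $(\mathcal{I}_3)_{L}$ immediately gives the claimed necessary and sufficient threshold.

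The main obstacle I anticipate is carefully justifying the decoupling of $\vec{r}$ and $\vec{s}$ from the optimal Bell value. Intuitively this follows because Proposition~\ref{prop} forces Alice's observables to be traceless and the saturation choice Eq.~(\ref{obsn}) then gives traceless $B_y$ as linear combinations of traceless $\tau'_v$, but one should confirm that allowing non-traceless $B_y$ could not help; equivalently, that for $n=3$ every dichotomic qubit observable can already be written as a linear combination of three anticommuting traceless operators, so that the restriction from $\mathcal{K}$ to $\mathcal{R}$ in the proof of Theorem~\ref{theo} is no restriction at all. This is where the $n=3$, two-qubit coincidence $n=d^2-1=2m+1$ does the essential work.
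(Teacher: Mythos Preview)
Your proposal is correct and follows essentially the same approach as the paper's own proof: both rely on the fact that for $m=1$ the anticommuting set $\mathcal{S}^{(3)}_l=\{\sigma_1,\sigma_2,\sigma_3\}$ exhausts the operator basis, that any two-qubit correlation matrix can be diagonalized by local unitaries via the $SU(2)\leftrightarrow SO(3)$ correspondence, and that together with Proposition~\ref{prop} this forces the optimum into $\mathcal{R}$, saturating Eq.~(\ref{cs2}). Your treatment is in fact more explicit than the paper's terse proof; in particular, your observation that every dichotomic qubit observable is automatically traceless and hence already a real linear combination of the three anticommuting Paulis is exactly the reason the passage from $\mathcal{K}$ to $\mathcal{R}$ in Eq.~(\ref{objred3}) incurs no loss here, and it renders your invocation of eigenvalue interlacing unnecessary (there is no proper sub-block in the $3\times 3$ case).
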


\begin{proof}
For a two-qubit state as in Eq.~(\ref{Fanoqubit}), there is only one mutually anticommuting set $\mathcal{S}^{(n)}_l = \{\sigma_x ,\sigma_y ,\sigma_z \}$. The elements of the correlation matrix $T_{\rho}$ are $t_{uv}= \Tr[\rho(\sigma_u \otimes\sigma'_v)]$.  Since any bipartite qubit state can be diagonalized with operators in diagonal positions mutually commuting, Bob's measurements need to be anticommuting. This fact, along with Proposition~\ref{prop}, readily implies that the solution of the optimization problem in Eq.~(\ref{obj}) comes from the set $\mathcal{R}$, ensuring the saturation of Eq.~(\ref{cs2}). 
\end{proof}

\begin{figure}[!ht] 
\includegraphics[scale=0.45]{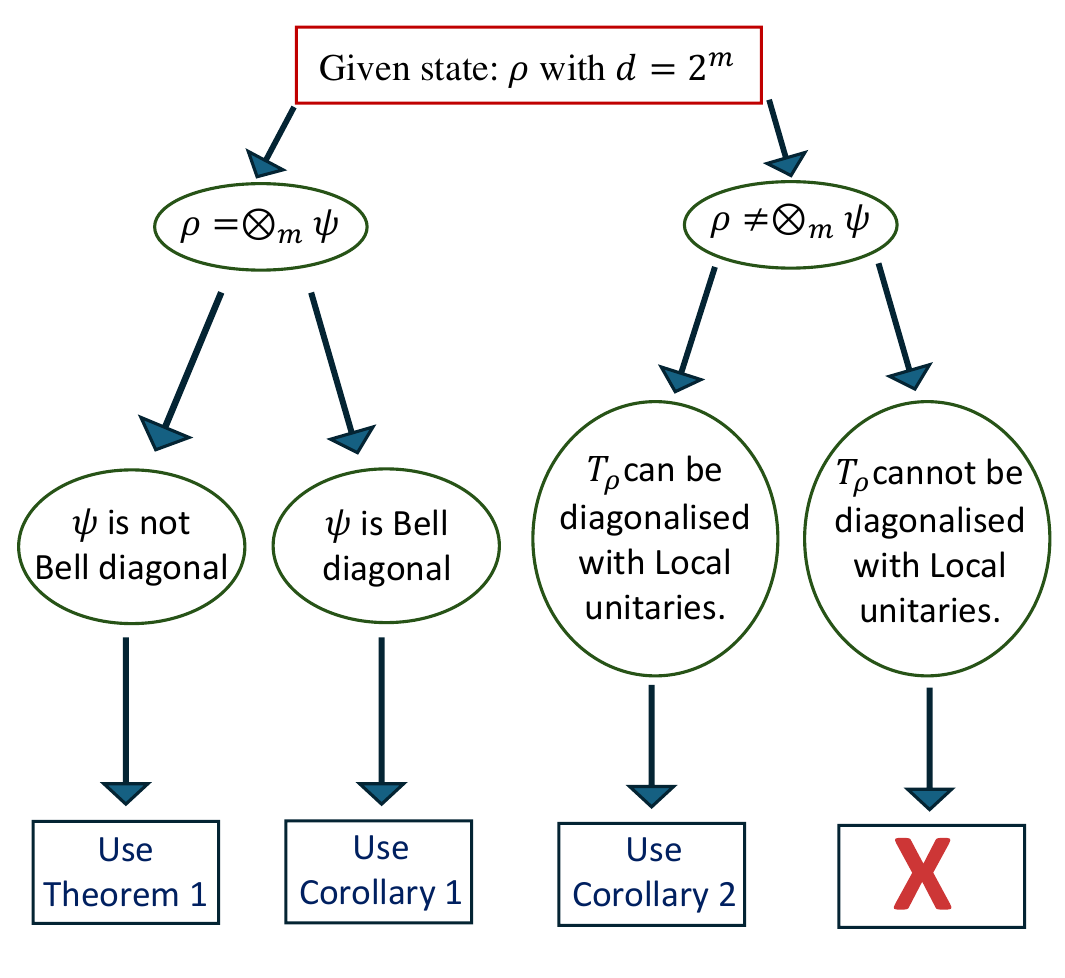}
\caption{A schematic diagram for evaluating the optimal Bell value for a given state.}
\label{picop}
\end{figure}


\subsection{Construction of Observables} \label{conobs}

Now, we provide the construction of the optimal observables for which the optimal bell value is achieved for any given state. We begin by providing a construction of the observables corresponding to the given two-qubit state to saturate the inequality for $n=3$. In this case, the anticommuting sets of operators on $\mathscr{H}^2$ exhaust the basis. We obtain observables of Alice and Bob that saturate the upper bounds as follows
\begin{eqnarray}
    \mathcal{A}_y = \sigma_y ; \ \ 
    B_y = \sigma'_y; \ \ \omega_y = \frac{4 \mu_y}{\sqrt{\mu^2_1 +\mu^2_2 +\mu^2_3}}
\end{eqnarray}

Next, $\mathcal{A}_y = \frac{A'_y}{\omega_y}$ leads to $A'_y =\omega_y \sigma_y$. Using the orthogonal matrix $\mathscr{M}$ as defined in Proposition \ref{prop}, along with a set of conditions on Alice's observables derived in Appx. \ref{parity obliv}, we invert the relationship in Eq.~(\ref{obs3}) to express $A$ in terms of $A'$ as follows
\begin{equation}
\begin{aligned}
    A_1 &= \frac{\mu_1 \sigma_1 +\mu_2 \sigma_2 +\mu_3 \sigma_3}{\sqrt{\mu^2_1 +\mu^2_2 +\mu^2_3}} \  ;  \ A_2 = \frac{\mu_1 \sigma_1 +\mu_2 \sigma_2 -\mu_3 \sigma_3}{\sqrt{\mu^2_1 +\mu^2_2 +\mu^2_3}}  \\
    A_3 &= \frac{\mu_1 \sigma_1 -\mu_2 \sigma_2 +\mu_3 \sigma_3}{\sqrt{\mu^2_1 +\mu^2_2 +\mu^2_3}} \  ;  \ A_4 = \frac{\mu_1 \sigma_1 -\mu_2 \sigma_2 -\mu_3 \sigma_3}{\sqrt{\mu^2_1 +\mu^2_2 +\mu^2_3}} \nonumber
\end{aligned}
\end{equation}
It is straightforward to check that when employing these measurement settings $\{A_x,B_y\}$ the optimal Bell value for an arbitrary given bipartite qubit state will be $4\sqrt{\mu^2_1 +\mu^2_2 +\mu^2_3}$.

Following a similar approach, the observables pertaining to $m$ copies of two-qubit entangled states for attaining the value $M_n (\rho)$ is constructed as
\begin{eqnarray} \label{obsconstruct}
    A_x&=& \frac{1}{\sqrt{\sum_{y,v} t^2_{yv}}} \sum_{y=1}^n \mathscr{M}^{\top}_{xy} \ \qty(\sqrt{\sum_{v} t^2_{yv}} \  \mathcal{A}_y )\nonumber \\
    B_y&=& \sum_{v} \qty(\frac{t_{yv}}{\sqrt{\sum_{v}t_{yv}^{2}}}) \ \tau'_v 
\end{eqnarray}
where $\mathscr{M}^{\top}_{xy}$ denotes the element in the $x^{th}$ row and $y^{th}$ column of the matrix $\mathscr{M}^{\top}$, which represents the inverse map from $\mathcal{A}$ to $A$, satisfying $\mathscr{M}^{\top} \cdot \mathscr{M}= \openone$ as shown in Appx.~\ref{nproof}. The sum over indices $y$ and $v$ run over sets $\{\tau_y \in S_l^{(n)} \}$ and $\{ \tau'_v \in S_{l'}\}$ respectively. It can be readily seen that this choice of observables saturates Eq.~(\ref{newbound}) when employed on the state $\rho$ given by Eq.~(\ref{Fanoqudit}). Then we obtain the corresponding Bell value as
 
\begin{equation}
\begin{aligned}
    \mathcal{I}_n (A,B,\rho) &=\frac{1}{d^2}\sum_{y,v,i,j} \omega_y \frac{t_{yv}}{\sqrt{\sum_v t^2_{yv}}} t_{ij} \Tr [(\tau_i \otimes \tau'_j)(\tau_y \otimes \tau'_v )] \\
    &=\sum_{y,v} \omega_y \delta_{iy}\delta_{jv}\frac{t_{yv} t_{ij}}{\sqrt{\sum_v t^2_{yv}}} \\
    &=\sum_{y,v} \omega_y \frac{t^2_{yv}}{\sqrt{\sum_v t^2_{yv}}} =2^{n-1} \sqrt{\sum_{y,v} t^2_{yv}}
    \end{aligned}
\end{equation}
\normalsize
where the sum over indices $y,v$ run over local dimensions such that $\{\tau_y \in S_l^{(n)} \}$ and $\{ \tau'_v \in S_{l'}\}$ respectively. Note that Eq. (\ref{obsconstruct}) applies to both decomposable and non-decomposable classes of two-qudit entangled states. When the state is decomposable, the operators in $\boldsymbol{\mathcal{A}}$ and $\boldsymbol{\tau}'$ are separable, as considered in Eq. (\ref{taubasis}). However, when the state is not decomposable, these operators are entangled.

Furthermore, for better clarity, we construct the observables pertaining to an arbitrary two copies of Bell diagonal two-qubit state for $n=4$  in Appx.~\ref{example}.

\emph{Remark.--}  The necessary and sufficient criterion hinges on the ability to diagonalize a given bipartite state using local unitary operations. Since any two-qubit state can be diagonalized by using local unitaries, the optimal value $\mathcal{I}_n (\rho)$ can be reached by observables within $\mathcal{R}$ following the arguments presented in Corollaries \ref{cor1} and \ref{cor2}. Furthermore, for the CHSH inequality, the maximum over anticommuting sets $\mathcal{S}_l$ translates to the sum of the two largest eigenvalues, precisely the famous Horodecki criterion \cite{Horodecki1995}.


\section{Summary and Outlook}\label{outlook}

Our analysis reveals that developing a necessary and sufficient criterion for a given bipartite state of local dimension $d=2^m$ to determine if it violates a $n$-settings Bell inequality given by Eq.~(\ref{nbell}) is intricately linked to the diagonalization of the state's correlation matrix. Specifically, this criterion rests on the demonstration that such diagonalization produces singular values which determine the optimal Bell value, with the diagonal operator basis corresponding to the optimal measurement directions for Alice and Bob. 

In our approach, the optimization over all possible observables $A_x$ and $B_y$ results in each joint expectation value $\braket{\mathcal{A}_y \otimes B_y}$ in Eq.~(\ref{obj}) corresponding to one of the diagonal elements of the state's diagonalized correlation matrix (singular values). Consequently, Alice's and Bob's measurement basis are determined by the diagonal operator basis for the state, thereby indicating that singular values are the optimal expectation values of $\mathcal{A}_y \otimes B_y$. 

For two-qubit states, we show that if Alice and Bob perform their measurements in the diagonal operator basis, the Bell value will be optimum for the corresponding state, establishing a connection between the optimal Bell value and the function $M_n(\rho)$. This enables us to come up with a necessary and sufficient Horodecki-like criterion in terms of the state parameters to determine whether the state violates Gisin's elegant Bell inequality ($n=3$) \cite{Gisin2009}, which has found several applications in information processing tasks such as the RAC game \cite{Ghorai2018}, device-independent certification of unsharp measurement \cite{Roy2023} and SIC-POVM \cite{Smania2020}, self-testing \cite{Andersson2017, Chen2021}, randomness certification \cite{Andersson2018} and Network nonlocality \cite{Tavakoli2017}. 

For two-qudit states, the diagonalization of the correlation matrix $T_{\rho}$ is more nuanced compared to that of two-qubit states.  For two-qubit states, rotation in $SO(3)$ is isomorphic to an unitary operation $SU(2)$, ensuring local unitary operations suffice for diagonalization and preserve the algebra of basis operators. However, for two-qudit states, $SU(d)\subset SO(d^2 -1)$, implying that the rotations used for diagonalization may lack local unitary representations and generally do not preserve the necessary properties of the required operator bases. After diagonalization, the operator basis may fail to satisfy the constraints in Eq.~(\ref{obj}), which are essential for our scenario and the considered $n$-settings Bell inequality.

A fundamental challenge in establishing necessary and sufficient conditions for Bell violation stems from the fact that $n$-settings Bell inequalities, as two-outcome inequalities, do not offer adequate variability to develop such criteria for arbitrary two-qudit states. Therefore, while it may not always be possible to derive a necessary and sufficient criterion for any two-qudit state that cannot be diagonalized with local unitary operations, our analysis shows that a sufficient criterion can be obtained by carefully defining $M_n(\rho)$ in a manner that tightly corresponds to those states which can be diagonalized with local unitary operations, thereby becoming necessary and sufficient.

To this end, by utilizing the symmetries in the Bell expression, we prove that the optimal value is reached when Alice's scaled observables defined in Eq.~(\ref{obs3}) are mutually anticommuting in proposition \ref{prop}. Next, by considering a class of states for which the given two-qudit state is decomposable into $m$ copies of the two-qubit state,  and assuming that $B_y$ can be expressed as a linear combination of anticommuting observables, we evaluate a lower bound on the solution of the optimization problem defined in Eq.~(\ref{obj}) in terms of the elements of the correlation matrix. While expressing $B_y$ as a linear combination of anticommuting observables ensures $B_y^2=\openone$, it does not guarantee this is the only way to satisfy the dichotomicity condition.  Thus, under this assumption, the optimization may not saturate the optimal quantum value because we are only optimizing over a subset of dichotomic traceless observables on Bob’s side. Hence, obtaining the optimal value in this manner constitutes a sufficient criterion for violating the $n$-settings Bell inequality for any given two-qudit state of $d=2^m$. 

On the other hand, for the rest of the states, when the given two-qudit state is not decomposable into the form $\otimes_m \psi$, diagonalization of the correlation matrix generally results in a diagonal operator basis that may not satisfy the constraints in Eq.~(\ref{obj}). However, within this class, there always exists another subclass of states whose correlation matrix can be diagonalised using local unitary operations. For this subclass of states, the optimal condition for achieving the optimal Bell value is for Bob's measurements to be anticommuting, and the optimal value is attained within $\mathcal{R}$. This condition then becomes \emph{necessary} as well (see flow-chart~\ref{picop} for an unified illustration of our findings). 


Our work opens up several future directions. Firstly, it calls for the characterization of the class of states, other than the Bell-diagonal ones, that can be diagonalized with local unitaries while preserving the dichotomicity and the requirements of Proposition \ref{prop}. Another direction is to develop a necessary and sufficient criterion, following our approach, for a state to violate $d$-outcome Bell inequality with constraints other than dichotomicity, allowing a larger set of observables to find the optimal value. An example of such a $d$-outcome Bell inequality has been proposed \cite{Salavrakos2017}, which has been utilized in the context of self-testing two-qudit entangled states and randomness generation \cite{Sarkar2021}. 

Moreover, our method can be applied to characterize the class of states useful for the sequential sharing of correlations among an arbitrary number of observers. While a certain class of states has been found useful for such applications, the classification of the entire set remains an open question \cite{Brown2020, Sasmal2023}. To achieve this, one must generalize our criterion by optimizing over all possible sets of POVMs, following the method introduced in \cite{Hall2022}, rather than being confined to projective measurements.


\section{Acknowledgements}  
SB acknowledges the hospitality of IIT Hyderabad and the support of the research grant IITH/SG160 and the INFOSYS grant. SK acknowledges the support from the Digital Horizon Europe project, FoQaCia,(\textit{Foundations of quantum computational advantage}), GA no,202070558, funded by the European Union and NSERC (Canada).  AKP acknowledges the support from the research grant SERB/MTR/2021/000908, Government of India. SS acknowledges the support from the National Natural Science Fund of China (Grant No. G0512250610191).


 

%


\appendix

\onecolumngrid

\section{Detailed proof of Proposition \ref{prop}} \label{parity obliv}

We first identify some essential features of the proof from the case $n=3$ in order to generalize for arbitrary $n$.

\subsection{Proof of $\max \sqrt{\omega_1^2 +\omega_2^2 + \omega_3^2} = 4$ when $\{\mathcal{A}_y,\mathcal{A}_{y'}\}=0 \ \forall y\neq y'\in\{1,2,3\}$}

Recalling the definition of $\omega_y$ given by Eq.~(\ref{obs3}) of main text, we evaluate the following
\begin{equation}
    \max\limits_{\{A_x\}} \sqrt{\sum_{y=1}^3 \omega_y^2} = \max\limits_{\{A_x\}} \sqrt{\Tr[\rho\qty({A_1^{'}}^2 +{A_2^{'}}^2 + {A_3^{'}}^2 )]} \ ; \ \ \ A'_y = \sum\limits_{x=1}^4 (-1)^{z_y^x} A_x \ \forall y \in [3]
\end{equation}
In order to evaluate $\sum\limits_{y=1}^3{A_y^{'}}^2$, we represent the coefficients $(-1)^{z^x_y}$ as a $3 \times 4 $ matrix $M$. We see that all the rows of $M$ are orthogonal to each other. We construct an orthogonal matrix $\mathscr{M}$ by adding a row $(-1, 1, 1, 1)$ to the matrix $M$.
\begin{eqnarray}
    M = \begin{bmatrix}
        1 & 1 & 1 & 1 \\
        1 & 1 & -1 & -1 \\
        1 & -1 & 1 & -1
    \end{bmatrix} \ \  \longrightarrow \ \  
    \mathscr{M} =\begin{bmatrix}
        1 & 1 & 1 & 1 \\
        1 & 1 & -1 & -1 \\
        1 & -1 & 1 & -1 \\
        1 & -1 & -1 & 1
    \end{bmatrix}
\end{eqnarray}
We introduce a dummy unnormalised observable $A'_4$ to the set $\qty{A'_y}_{y=1}^{3}$ such that the following matrix equation holds
\begin{eqnarray} \label{matdef}
    \begin{bmatrix}
        A'_1 \\
        A'_2 \\
        A'_3 \\
        A'_4
    \end{bmatrix}=
    \begin{bmatrix}
        1 & 1 & 1 & 1 \\
        1 & 1 & -1 & -1 \\
        1 & -1 & 1 & -1 \\
        1 & -1 & -1 & 1
    \end{bmatrix} \
    \begin{bmatrix}
        A_1 \\
        A_2 \\
        A_3 \\
        A_4
    \end{bmatrix}
\end{eqnarray}
Taking the transpose of Eq.~(\ref{matdef}) and multiplying with itself, we get
\begin{eqnarray}
\begin{bmatrix}
        A'_1 & A'_2 & A'_3 & A'_4
    \end{bmatrix}
    \begin{bmatrix}
        A'_1 \\
        A'_2 \\
        A'_3 \\
        A'_4
    \end{bmatrix}&=&
    \begin{bmatrix}
        A_1 & A_2 & A_3 & A_4
    \end{bmatrix}
    \begin{bmatrix}
        1 & 1 & 1 & 1 \\
        1 & 1 & -1 & -1 \\
        1 & -1 & 1 & -1 \\
        1 & -1 & -1 & 1
    \end{bmatrix}
    \begin{bmatrix}
        1 & 1 & 1 & 1 \\
        1 & 1 & -1 & -1 \\
        1 & -1 & 1 & -1 \\
        1 & -1 & -1 & 1
    \end{bmatrix}
    \begin{bmatrix}
        A_1 \\
        A_2 \\
        A_3 \\
        A_4
    \end{bmatrix} \nonumber\\
\Rightarrow \ \ {A_1^{'}}^2 +{A_2^{'}}^2 + {A_3^{'}}^2 + {A_4^{'}}^2 &=& 4 \ \qty(A_1^2 +A_2^2 +A_3^2+A_4^2) \nonumber\\
\Rightarrow \ \ {A_1^{'}}^2 +{A_2^{'}}^2 + {A_3^{'}}^2 &=& 16 \ \openone -{A_4^{'}}^2 \ \ \ \  [\text{Since} \ A_i^2 =\openone] \label{n31}
\end{eqnarray}
Hence, from Eq.~(\ref{n31}), we obtain $\max\limits_{\{A_x\}} \sqrt{\Tr[\rho\qty({A_1^{'}}^2 +{A_2^{'}}^2 + {A_3^{'}}^2 )]} = \max\limits_{\{A_x\}} \sqrt{16-\Tr[\rho \ {A_4^{'}}^2]}$. Since ${A_4^{'}}^2$ and $\rho$ are positive semi-definite, $\Tr[\rho \ {A_4^{'}}^2] \geq 0$. Thus, for optimal value of $\sum\limits_{y=1}^3{A_y^{'}}^2$, the corresponding Alice's observable leads to $\Tr[\rho \ {A_4^{'}}^2] =0$ for any given $\rho$, implying $A_4^{'}=0$. Hence, $ \max\limits_{\{A_x\}} \sqrt{\sum_{y=1}^3 \omega_y^2} = 4$. We derive the constraints on Alice's observables from the operator relation ${A_4^{'}}=0$. Putting this in Eq.~(\ref{matdef}) and multiplying by $\mathscr{M}^{-1}$ on both sides, we obtain
\begin{eqnarray}
\frac{1}{4}\begin{bmatrix}
        1 & 1 & 1 & 1 \\
        1 & 1 & -1 & -1 \\
        1 & -1 & 1 & -1 \\
        1 & -1 & -1 & 1
    \end{bmatrix}
    \begin{bmatrix}
        A'_1 \\
        A'_2 \\
        A'_3 \\
        0
    \end{bmatrix}=
    \begin{bmatrix}
        A_1 \\
        A_2 \\
        A_3 \\
        A_4
    \end{bmatrix} 
\end{eqnarray}
which gives us $\{A_x\}$ in terms of $A_y^{'}$ as
\begin{eqnarray}\label{ai}
 A_1 =  \frac{1}{4} (A_1^{'} + A_2^{'} + A_3^{'}) \ ; \ \ A_2=\frac{1}{4} (A_1^{'} + A_2^{'} - A_3^{'}) \ ; \ \ A_3= \frac{1}{4} (A_1^{'} - A_2^{'} + A_3^{'}) \ ; \ \ A_4 = \frac{1}{4} (A_1^{'} - A_2^{'} - A_3^{'})
\end{eqnarray}
From these equations, we find $\{A_y^{'},A_{y'}^{'}\}\equiv A'_{yy'}$ by using $A_{x}^2 =A_{x'}^2 ~ \forall x,x'$. Since there are three anticommutation relations, we can always suitably choose three equations out of six available equations to solve for $A'_{yy'}$. By choosing, $A_1^2 =A_x^2$ we obtain the following
\begin{equation}
    A'_{13} +A'_{23}=0  \ ; \ \ A'_{12} +A'_{23}=0 \ ; \ \ A'_{12} +A'_{13}=0 \label{acn3}
\end{equation}
Eq.~(\ref{acn3}) can be solved by considering the following matrix equation
\begin{eqnarray}
\label{aceqn3}
    \begin{bmatrix}
        0 & 1 & 1 \\
        1 & 0 & 1 \\
        1 & 1 & 0
    \end{bmatrix}
    \begin{bmatrix}
        A'_{12} \\
        A'_{13} \\
        A'_{23} 
    \end{bmatrix}=0
\end{eqnarray}
This is a linear homogeneous equation of three variables. Since each row of the square matrix is linearly independent, it is a rank-3 matrix. Therefore, the only solution of this set of equations is that $\{A'_{y} ,A'_{y'}\} =0 \ \forall y,y'$. 

For a general $n$, we use the following features of the preceding example:
\begin{enumerate} [(i)]
\item Starting from the relation between $A'_y$ and $A_x$, we represent it as a linear transformation by a $(n \times 2^{n-1} )$ matrix $M$ on a column vector whose elements are $\{A_x\}$.
\item  We then show that for any $n$ the rows of $M$ are orthogonal. Using this, we can add rows to $M$ to make it a $(2^{n-1} \times 2^{n-1})$ orthogonal matrix $\mathscr{M}$. We can rewrite the relation between $A'_y$ and $A_x$ using the matrix $\mathscr{M}$ by adding dummy observables $\{A'_l\}_{l=n+1}^{2^{n-1}}$.
\item Using the property $A_x^2 =\openone$ and the structure of $\mathscr{M}$ we rewrite $\sqrt{\sum_{y=1}^n \omega_y^2}$ in terms of $\sum_{l=n+1}^{2^{n-1}}A'^2_l$, which consists of the dummy operators. The maximum of $\sqrt{\sum_{y=1}^n \omega_y^2}$ is attained when $A'_l =0 \ \forall l>n$.
\item We then invert the relation obtained in step 3 and substitute $A_x$ in terms of $A'_y$ in the relation $A_x^2 = A_{x'}^2$ to obtain a set of equations in terms of the anticommutators $\{A'_y, A'_{y'} \}$.
\item We prove that this set of equations has no nontrivial solution which leads to the anticommutation relations $\{A'_y, A'_{y'} \}=0 \ \forall y,y'\in [n]$. In other words, the generalization of the matrix defined in Eq.~(\ref{aceqn3}) with dimension $\frac{n(n-1)}{2}\times(2^{n-1}-1)$ contains $\frac{n(n-1)}{2}$ linear independent rows. This results in only trivial solutions for the anticommutation relations (see Eq.~\ref{aceqn}).
\end{enumerate}


\subsection{Proof of $\max \sqrt{\sum_{y=1}^n \omega_y^2} = 2^{n-1}$ when $\{\mathcal{A}_y,\mathcal{A}_{y'}\}=0 $ for arbitrary $n$} \label{nproof}

Similar to the case for $n=3$, in order to evaluate $ \max\limits_{\{A_x\}} \sqrt{\sum_{y=1}^n \omega_y^2} = \max\limits_{\{A_x\}} \sqrt{\Tr[\rho\qty(\sum\limits_{y=1}^n{A_y^{'}}^2)]}$, where $A'_y = \sum\limits_{x=1}^{2^{n-1}} (-1)^{z_y^x} A_x \ \forall y \in [n]$, we first consider a $n \times 2^{n-1}$ matrix $M$ where element of the $y^{th}$ row of $x^{th}$ column is given by $M_{yx}=(-1)^{z_y^x} \ \forall x \in [2^{n-1}], y \in [n]$.  The term ${z_{y}^{x}}$ is the $(n+1-y)^{th}$ bit of the number $(x-1)$ expressed in binary - or equivalently they can be defined as $n$ bit strings whose first bit is zero and the rest $n-1$ bits exhaust all possible strings of length $n-1$. The dot product between any two rows labelled by $(y,y')$, is given by $\sum_x M_{yx} M_{y'x}=\sum_x (-1)^{z^x_y \oplus z^x_{y'}}$.  The structure of $M$ is such that there are $2^{n-2}$ elements satisfying $(-1)^{z^x_y \oplus z^x_{y'}}=1$ and the other $2^{n-2}$ elements have values $(-1)^{z^x_y \oplus z^x_{y'}}=-1$. For any two rows $y,y'$, $z^x_y =z^x_{y'}$ for $2^{n-2}$ values of $x$ and $z^x_y \neq z^x_{y'}$ for the remaining $2^{n-2}$ values. This can be shown in the following way. The elements of $z^x_y$ are defined as $n$ bit strings whose first bit is zero, and the rest $n-1$ bits exhaust all possible strings of length $n-1$. Lets fix the value of $y^{th}$ and $y'^{th}$ bits, say $z^x_y =z^{x}_{y'}=0$. There are $2^{n-3}$ strings such that $z^x_y =z^{x}_{y'}=0$. If we now impose the condition $z^x_y =z^x_{y'}=0,1$ on the $y^{th}$ and $y'^{th}$ bit, we will get $2^{n-2}$ such strings. The remaining strings satisfy $z^x_y \neq z^x_{y'}$.  Thus $\sum_x M_{yx} M_{y'x}=0$, implying any two rows of $M$ are orthogonal.

Now, we can always add $(2^{n-1} -n)$ numbers of mutually orthogonal rows to $M$ so that it becomes an orthogonal matrix. Let the newly constructed orthogonal matrix be denoted by $\mathscr{M}$. An element in the $y^{th}$ row of $x^{th}$ column is denoted by $\mathscr{M}_{yx}: \forall y,x \in [2^{n-1}]$. Since $\mathscr{M}$ is orthogonal, we have $\mathscr{M}^{\top} \cdot \mathscr{M} = 2^{n-1}\openone$.

Next, we map the set of equations $A'_y = \sum\limits_{x=1}^{2^{n-1}} (-1)^{z_y^x} A_x \ \forall y \in [n]$ into a matrix equation by introducing two column matrix $\textbf{A}'$ and $\textbf{A}$, each having $2^{n-1}$ elements. Each element of $\textbf{A}$ represents one of Alice's observables $\{A_x\}_{x=1}^{2^{n-1}}$. First $n$ elements of $\textbf{A}'$ corresponds to linear combinations of $\{A_x\}_{x=1}^{2^{n-1}}$ that appear in the Bell inequality. The rest are dummy observables $\{A'_y\}_{y=n+1}^{2^{n-1}}$. Then we represent the set of equations in terms of the following matrix equation
    \begin{equation}
        \textbf{A}' = \mathscr{M} \cdot \textbf{A} \label{matdef1}
    \end{equation}
Taking the transpose of Eq.~(\ref{matdef1}) and multiplying with itself, we get
\begin{eqnarray}
     \textbf{A}'^{\top} \cdot \textbf{A}' &=& \textbf{A}^{\top}  \mathscr{M}^{\top} \mathscr{M}  \ \textbf{A} = 2^{n-1} \textbf{A}^{\top} \cdot \textbf{A} \nonumber \\
     \implies \sum\limits_{y=1}^{2^{n-1}} {A^{'}_y}^2 &=& 2^{n-1} \sum\limits_{x=1}^{2^{n-1}} {A^2_x} = 4^{n-1} \ \openone \nonumber \\
     \implies  \sum\limits_{y=1}^{n} {A^{'}_y}^2 &=& 4^{n-1} \ \openone - \sum\limits_{y=n+1}^{2^{n-1}} {A^{'}_y}^2 \label{trans} \\
     \implies  \max\limits_{\{A_x\}} \sqrt{\Tr[\rho\qty(\sum\limits_{y=1}^n{A_y^{'}}^2)]} &=& \max\limits_{\{A_x\}} \sqrt{4^{n-1} - \Tr[\rho\qty(\sum\limits_{y=n+1}^{2^{n-1}}{A_y^{'}}^2)] } \label{trans2}
\end{eqnarray}
Note that since any given $\rho$ and $A'^2_y$ are positive semi-definite, $\Tr[\rho\qty(\sum\limits_{y=n+1}^{2^{n-1}}{A_y^{'}}^2)]\geq 0 \ , y \in \{n+1, n+2,...,2^{n-1}\}$. Hence, the optimal value for each given $\rho$ occurs if $\sum\limits_{y=n+1}^{2^{n-1}}\Tr[\rho\qty({A_y^{'}}^2)] =0 $, implying $\Tr[\rho A'_y]=0 \ \forall \rho,y \in \{n+1, n+2,...,2^{n-1}\}$ that leads to $(2^{n-1}-n)$ state independent constraints on Alice's observables. This conforms with the parity oblivious conditions \cite{Ghorai2018} given by $A'_y=0$ when $y \in \{n+1, n+2,...,2^{n-1}\}$. Using this condition, from Eq.~(\ref{trans2}), the maximum value is derived as
\begin{equation}\label{alicecondn}
    \max \sqrt{\sum_{y=1}^n \omega_y^2} = 2^{n-1} \ \ \text{when $A'_y=0 \ \forall y \in \{n+1,n+2,...,2^{n-1}\}$ }
\end{equation}
Now, we derive that the relation $A'_y=0 \ \forall y \in \{n+1, n+2,...,2^{n-1}\}$ leads to the anticommutation relation, $\{A'_y ,A'_{y'}\}=0 ~\forall y\neq y' \in [n]$. Multiplying $\mathscr{M}^{\top}$ to Eq.~(\ref{matdef1}), we  obtain $\textbf{A}= 2^{1-n}\mathscr{M}^{\top} \cdot \textbf{A}'$ since $\mathscr{M}^{\top}\mathscr{M}=2^{n-1}\openone$, implying the inverse relationship between $\{A_x\}_{x=1}^{2^{n-1}}$ and $\{A'_y\}_{y=1}^{n}$, given by $ A_x = 2^{1-n}\sum_{y=1}^{2^{n-1}} (-1)^{z^x_y} A'_y = 2^{1-n} \sum_{y=1}^{n} (-1)^{z^x_y} A'_y$ since $A'_y =0~\forall y> n$. Using the identity $A^2_1 =A^2_{x} ~\forall x >1$, we obtain the following set of $(2^{n-1} -1)$ equations
 \begin{equation}\label{aceqn}
 \begin{aligned}
 &\sum_{y,y'=1}^n A'_y A'_{y'} = \sum_{y,y'=1}^n (-1)^{z^x_y \oplus z^x_{y'}}A'_y A'_{y'} \ \ \ \ \ [\text{for each} \ x>1] \\
     \implies & \sum_{y< y'} \qty[1-(-1)^{z^x_y \oplus z^x_{y'}}] \ \qty{A'_y,A'_{y'}} =0 \\
     \implies & \sum_{y<y'} \qty(z^x_y \oplus z^x_{y'}) \ \qty{A'_y,A'_{y'}} =0  \\
     \implies & \textbf{P} \cdot \mathcal{S}  =0   
     \end{aligned}
 \end{equation}
where the row matrix $\textbf{P}\equiv \qty[\{A'_1,A'_{2}\} \ \{A'_1,A'_{3}\} \ \cdots \{A'_{(n-1)},A'_{n}\}]$ contains $\frac{n(n-1)}{2}$ number of elements represents each anticommutation relation. $\mathcal{S}$ is a $\qty(\frac{n(n-1)}{2}) \times \qty(2^{n-1}-1)$ matrix. We label rows of matrix $\mathcal{S}$ by $\mu \equiv (y,y'); \ y'>y$ and the columns by $j$. Each element is given by
\begin{equation}\label{smat}
\mathcal{S}_{\mu j}\equiv z^{(j+1)}_y \oplus z^{(j+1)}_{y'} \ \ \text{with $x=j
+1$}
\end{equation}
Note that $\text{rank}(\mathcal{S}) \leq  \frac{n(n-1)}{2} \leq (2^{n-1}-1) \ \ \forall n \geq 2$. Now, we show that $\mathcal{S}$ is full rank with $\text{rank}(\mathcal{S}) =  \frac{n(n-1)}{2}$, which immediately implies that the only possible solution of Eq.~(\ref{aceqn}) is $\qty{A'_y ,A'_{y'} }=0 \ \forall y < y' \in [n]$.

To begin with, we explain how we transform $\mathcal{S}$ into an upper triangular form by suitably swapping columns and performing a sequence of row operations. Since row and column operations are rank preserving, the rank of the transformed matrix is the same as that of $\mathcal{S}$. Now, let us consider the first $(n-1)$ elements of $j^{th}$ column. These elements are labelled by $\mu= \{(1,2),(1,3),\cdots, (1,n)\}$. Using the property $z^x_y =0$ when $y=1$, we obtain
\begin{eqnarray}\label{sblock}
    &&\mathcal{S}_{\mu j} =z^{(j+1)}_1 \oplus z^{(j+1)}_{y'}=z^{(j+1)}_{y'}  \ \ \text{when} \ y=1, y'\in(2,3,...,n) , j\in(1,..., 2^{n-1} -1)  
\end{eqnarray}
Each column represents the binary form of an integer lying between $1$ and $( 2^{n-1} -1)$. Equivalently, they represent all possible bit strings of length $(n-1)$ excluding the bit string with all bits equal to zero corresponding to $x=1$. This is represented in Fig.~\ref{co}(a). 

Now, we rearrange the columns of $\mathcal{S}$ in such a way so the set $\{1,2,\cdots , 2^{n-1} -1 \}$ is partitioned into $n-1$ blocks and the $k^{th}$ block is identified as the collection of bit strings of length $(n-1)$ containing $k$ number of ones. Concretely, the first block of dimension $(n-1)\times(n-1)$ becomes the identity matrix, $\openone_{(n-1)\times(n-1)}$ denoted as \textbf{I} in Fig.~\ref{co}(b). Each column of the next $(n-1)\times \frac{(n-1)(n-2)}{2}$ dimensional block starting from $j=n$  contains two $1$s along with remaining $(n-3)$ zeroes. This is the block labeled \textbf{II} in Fig.~\ref{co}(b). The block \textbf{II} can be further divided into sub-blocks as follows
\begin{equation}\label{smat2}
\textbf{II}=\begin{pmatrix}
\ S^{1} & S^{2} & \ldots & S^{l} & \ldots & S^{n-2}\\
\end{pmatrix}
\end{equation}
The dimension of each sub-block $S^{l}$ is $(n-1)\times(n-l-1)$. Taking $\alpha$ as the row index and $\beta$ as the column index  of $S^{l}$, the corresponding elements are given by
\begin{equation}
    S^{l}_{\alpha\beta} = \begin{cases}
    1 \ \text{ if $\alpha=l$ }\\
    1 \ \text{ if $\alpha=l+\beta$}\\
    0 \ \text{ otherwise}
    \end{cases}
\end{equation}
Here, $1 \leq \alpha \leq n-1$ and $1 \leq \beta \leq n-l-1$. In particular, the column $\beta$ of $S^{l}$ can be written as
\begin{equation}
\label{slbta}
S^{l}_{\beta}=\left(
\begin{array}{c}
\bold{0}_{(l-1)\times 1}\\
1 \\
\bold{0}_{(\beta-1)\times 1}\\
1\\
\bold{0}_{w\times 1}
\end{array}
\right)_{(n-1)\times 1}
\end{equation}
with $w=(n-1)-(l+\beta)$, and $\bold{0}_{v\times 1}$ is a column vector consisting of zeros with $v$ rows.

Subsequently, the $k^{th}$ block containing $k$ number of ones\footnote{Not shown in the figure and not important for our case as explained later.} in each column has dimensions $(n-1)\times \Comb{n-1}{k}$. The $k^{th}$ block starts from the $j^{th}$ column of $\mathcal{S}$ where $j= \sum_{i=1}^{k-1} \Comb{n-1}{i}$. The block labelled as \textbf{V} in Fig. \ref{co}(b) consisting of the rows given by $\mu= (y,y'): (y\geq 2)(y'>y)$ constitutes a submatrix of $\mathcal{S}$ of dimensions $\frac{(n-1)(n-2)}{2} \times (2^{n-1}-1)$. The elements of this and the remaining block in Fig.~\ref{co}(b) can be determined using Eq.~(\ref{smat}). Now we show that the largest square matrix contained in $\mathcal{S}$ (see, Eq.~(\ref{lsqm}) contains linear independent rows implying only trivial solutions exist for Eq.~(\ref{aceqn}) (i.e. solutions of the form $\{A_{y}',A'_{y'}\}=0$).

\begin{figure}[!ht] 
\includegraphics[scale=0.50]{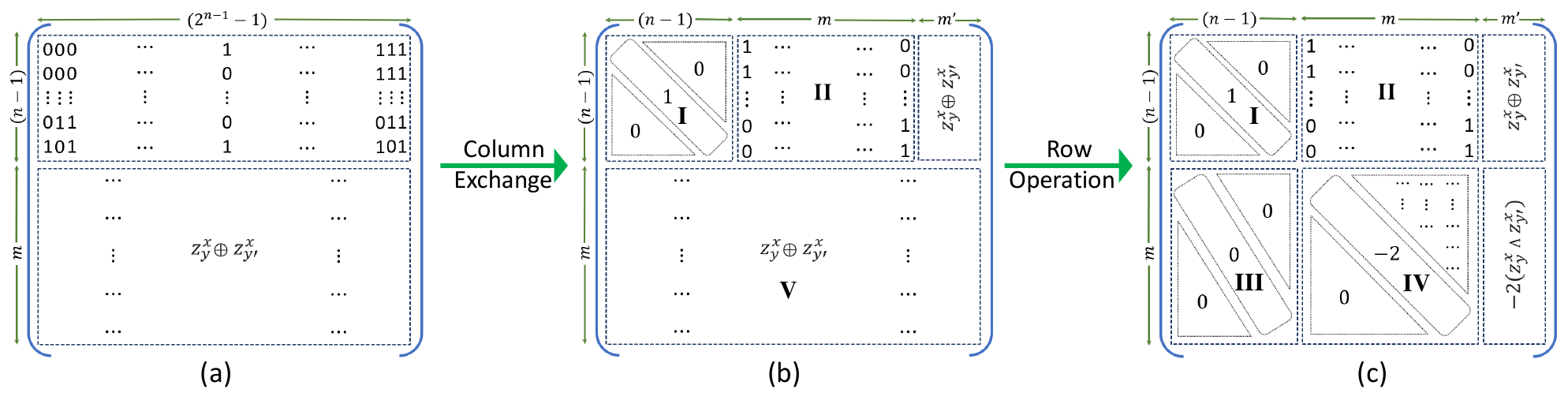}
\caption{The figure depicting the column and row operations on the matrix $\mathcal{S}$ with $m=\frac{(n-1)(n-2)}{2}$ and $m'=2^{n-1}-1-\frac{n(n-1)}{2}$.}
\label{co}
\end{figure}

Next, we perform the following row operations on the block \textbf{III} ($y\geq 2, y'>y, x\leq \frac{n(n-1)}{2}$)
\begin{eqnarray} \label{rowreduc}
   \mathcal{S}_{\mu j}= \mathcal{S}_{(y,y')j} &\longrightarrow& \mathcal{S}_{(y,y')j} - \mathcal{S}_{(1,y)j} -\mathcal{S}_{(1,y')j} \nonumber\\
    &=& \qty[z^{(j+1)}_y \oplus z^{(j+1)}_{y'} ] - \qty[z^{(j+1)}_y + z^{(j+1)}_{y'} ] \nonumber\\
    &=& -2 \ \qty[z^{(j+1)}_y \cdot z^{(j+1)}_{y'}] \nonumber\\
    &=& -2 \ \qty[\mathcal{S}_{(1,y)j} \cdot \mathcal{S}_{(1,y')j}]
\end{eqnarray}
The first equality is implied by using Eq.~(\ref{smat}) and Eq.~(\ref{sblock}). The second equality is the consequence of the fact that each element $z^{x}_{y}$ is binary. We will show that after performing these operations, the matrix $\mathcal{S}$ is in upper triangular form. This will be seen to have $\frac{n(n-1)}{2}$ linear independent rows which implies that the matrix $\mathcal{S}$ is nonsingular and no non-trivial solution to Eq.~(\ref{aceqn}) exists. 

 Recall that the first $(n-1)\times(n-1)$ dimensional block of the matrix in Fig. \ref{co}(b) is the identity matrix. After the row operations, the $m\times (n-1)$ (here, $m= (n-1)(n-2)/2$) block \textbf{III} will contain only zeros. This is implied by Eq.~(\ref{rowreduc}) and the fact that each column in block \textbf{I} has only one non-zero element. Next, we show that $m\times m$ block \textbf{IV} underneath \textbf{II} is an Identity matrix after the row operations. Recalling Eq.~(\ref{smat2}), the $\frac{n(n-1)}{2}\times\frac{n(n-1)}{2}$ block of the matrix $\mathcal{S}$, $\mathcal{S}_{sq}$ which contains the four sub-blocks \textbf{I}, \textbf{II}, \textbf{III}, \textbf{IV} in Fig.~\ref{co}$(c)$ is now written as
\begin{equation} \label{lsqm}
\mathcal{S}_{sq}=\begin{bmatrix}
 \openone_{(n-1)\times(n-1)} & \smash[b]{\overbrace{ \begin{matrix} S^{1} & S^{2} & \ldots & S^{l} & \ldots & S^{n-2} \end{matrix}}^{\textbf{II}}} \\
\bold{0}_{m\times(n-1)}  &  \smash[b]{\underbrace{ \begin{matrix} K^{1} & K^{2} & \ldots & K^{l} & \ldots & K^{n-2} \end{matrix}}_{\textbf{IV}}}
  \end{bmatrix}
\end{equation}
Here $\textbf{I}=\openone_{(n-1)\times(n-1)}$ and $\textbf{III}=\bold{0}_{m\times(n-1)}$. Recall that our goal was to show that the rows of $\mathcal{S}$ are linearly independent. This shows that the square matrix $\mathcal{S}_{sq}$ is nonsingular. In the following, we will show that the row operations in Eq.~(\ref{rowreduc}) reduce the $m\times m$ submatrix \textbf{IV} to $\openone_{m\times m}$. This will directly imply that $\mathcal{S}_{sq}$ is an upper triangular matrix and hence nonsingular. This is a sufficient condition for $\mathcal{S}$ to be linearly independent and we need not consider the remaining columns.  To begin, note that the row operations in Eq.~(\ref{rowreduc}) only change the values in $K^{l}$. Furthermore, Eq.~(\ref{slbta}) implies that $\beta^{th}$ column of $S^{l}$ contains two ones. Thus, the row operation of Eq.~(\ref{rowreduc}) demands that there will be one non-zero element in the $\beta^{th}$th column in $K^{l}$, $K^{l}_{\beta}$. In what follows, we will show that the non-zero element of $K^{l}_{\beta}$ will lie on the diagonal of $\mathcal{S}_{sq}$. To do this, we will find out the row and column positions of the non-zero element of $K^{l}_{\beta}$ with respect to $\mathcal{S}_{sq}$ using Eqs.~(\ref{slbta}) and (\ref{rowreduc}). Note that, first $\bold{0}_{(l-1)\times 1}$ vector gives $0$ for all the $\sum_{k=1}^{l-1}(n-k-1)$ elements of $K^{l}_{\beta}$. Then the $1$ gives $0$ for $(\beta-1)$ elements and then gives $1$. The remaining positions will be filled with $0$s. Therefore, the structure of $K^{l}_{\beta}$ given by
 \begin{equation}
\label{klbta}
K^{l}_{\beta}=\left(
\begin{array}{c}
\bold{0}_{q\times 1}\\
1\\
\bold{0}_{r\times 1}
\end{array}
\right)_{\qty(\frac{n(n-1)}{2})\times 1}
\end{equation}
where $q=\sum_{k=1}^{l-1}(n-k-1) + \beta -1= \frac{(l-1)(2n-l)}{2} +\beta -1$ \footnote{Here $\sum_{k= *}$ acts only on the term $(n-k-1)$.} and $r= \frac{1}{2} [(n-1)(n-2)-(l-1)(2n-l)-2\beta] $.
 
Thus, in the column vector $K^{l}_{\beta}$ the $1$ occurs in the $(q+1)$th row. Adding the previous $(n-1)$ rows of $S^{l}_{\beta}$, we can conclude - in the square matrix $\mathcal{S}_{sq}$, the column vector $(S^{l}_{\beta},K^{l}_{\beta})$ of dimension $n(n-1)/2\times 1$ has $1$ in the row indexed as $(q+n)$. Now, we find the column position in the square matrix $\mathcal{S}_{sq}$ corresponding to the $\beta$th column in $S^{l}(K^{l})$. If this column number is equal to the row number $(q+n)$ obtained above, the one in $K^{l}_{\beta}$ is in the diagonal of $\mathcal{S}_{sq}$.

Recall that, the $\beta$ subscript in $S^{l}_{\beta}(K^{l}_{\beta})$ denotes the $\beta^{th}$ column of the sub-block $S^{l}(K^{l})$ i.e.  $1\leq \beta\leq (n-l-1)$.  Note that, there is an Identity matrix with $(n-1)$ columns. Subsequently, there are $(S^{1},S^{2},\ldots,S^{l-1})$ sub-matrices before $S^{l}$. Since the dimension of each sub-matrix $S^{k}$ is $(n-1)\times(n-k-1)$. the total number of columns of $(S^{1},S^{2},\ldots,S^{l-1})$ will be $\sum_{k=1}^{l-1}(n-k-1)$, adding the first $(n-1)$ columns of the identity matrix and the $(\beta-1)$ columns in $S^{l}$ before $\beta$th column the total number of columns before $\beta$th column of $S^{l}$ is $\sum_{k=1}^{l-1}(n-k-1)+(n-1)+\beta-1=q+n-1$. Therefore, $\beta$th column in $S^{l}$ is the $(q+n)$th column of the matrix $\mathcal{S}_{sq}$. Hence, for all the columns of $S^{l}$, only those elements in $K^{l}$ will be nonzero, which lies on the diagonal of $\mathcal{S}_{sq}$. Thus, we have $(\mathcal{S}_{sq})_{\gamma,\gamma}=1$ where $\gamma=(q+n)$. Thus, the $m\times m$ block \textbf{IV} is an identity matrix.

Therefore, after suitable row and column operations, the matrix $\mathcal{S}$ becomes upper-triangular, implying it is of full rank. This indicates that the only possible solution to Eq.~(\ref{aceqn}) is $\textbf{P}=0$ implying $\{\mathcal{A}_y,\mathcal{A}_{y'}\}=0 $.

For better clarity, we provide an instance of the techniques employed in this proof for the case $n=4$. Specifically, in the following, we restate that $\max \sqrt{\sum_{y=1}^4 \omega_y^2} = 8$ when $\{\mathcal{A}_y,\mathcal{A}_{y'}\}=0 
 \ \forall y \in \{1,2,3,4\}$.

\subsubsection*{An illustration of the general proof for $n=4$}

Recalling Eq.~(\ref{matdef1}) for $n=4$
\begin{eqnarray}\label{matdef4}
   \textbf{A}' = \mathscr{M} \cdot \textbf{A} \implies  
   \begin{bmatrix}
        A'_1 \\
        A'_2 \\
        A'_3 \\
        A'_4 \\
        A'_5   \\
        A'_6 \\
        A'_7 \\
        A'_8
    \end{bmatrix}
    =\begin{bmatrix}
        1 & 1 & 1 & 1 & 1 & 1 & 1 & 1 \\
        1 & 1 & 1 & 1 & -1 & -1 & -1 & -1 \\
        1 & 1 & -1 & -1 & 1 & 1 & -1 & -1 \\
        1 & -1 & 1 & -1 & 1 & -1 & 1 & -1 \\
        1 & 1 & -1 & -1 & -1 & -1 & 1 & 1 \\
        1 & -1 & 1 & -1 & -1 & 1 & -1 & 1 \\
        1 & -1 & -1 & 1 & 1 & -1 & -1 & 1 \\
        1 & -1 & -1 & 1 & -1 & 1 & 1 & -1 
    \end{bmatrix}
    \begin{bmatrix}
        A_1 \\
        A_2 \\
        A_3 \\
        A_4 \\
        A_5 \\
        A_6 \\
        A_7 \\
        A_8 
    \end{bmatrix}
\end{eqnarray}
Expanding Eq.~(\ref{trans}) for $n=4$, we obtain
\begin{eqnarray}
     \sum_{y=1}^4 \langle {A^{'}_y}^2 \rangle &=& 64 - \langle (A_1 +A_2 -A_3 -A_4 -A_5 -A_6 +A_7 +A_8 )^2\rangle -\langle (A_1 -A_2 +A_3 -A_4 -A_5 +A_6 -A_7 +A_8 )^2\rangle \nonumber\\
     && - \langle (A_1 -A_2 -A_3 +A_4 +A_5 -A_6 -A_7 +A_8 )^2\rangle -\langle (A_1 -A_2 -A_3 +A_4 -A_5 +A_6 +A_7 -A_8 )^2\rangle \nonumber\\
     &=& 64 - \sum_{y=5}^8 \langle {A'}_y^2 \rangle
 \end{eqnarray}
Following the argument presented for the general case, for the optimal value $A'_y=0 \ \forall y\in\{5,6,7,8\}$. Now, inverting Eq.~(\ref{matdef4}) yields $\{A_x\}$ in terms of $\{A'_y\}$ as follows (inversion is possible because $\mathscr{M}$ is an orthogonal matrix)
\begin{eqnarray}
 \textbf{A} = \mathscr{M}^{\top} \cdot \textbf{A}' \implies 
    \begin{bmatrix}
        A_1 \\
        A_2 \\
        A_3 \\
        A_4 \\
        A_5 \\
        A_6 \\
        A_7 \\
        A_8 
    \end{bmatrix}
    = \frac{1}{8} \begin{bmatrix}
      1 & 1 & 1 & ~1 & ~1 & ~1 & ~1 & ~1 \\
      1 & 1 & 1 & -1 & ~1 & -1 & -1 & -1 \\
      1 & 1 & -1 & ~1 & -1 & ~1 & -1 & -1 \\
      1 & 1 & -1 & -1 & -1 & -1 & ~1 & 1 \\
      1 & -1 & 1 & ~1 & -1 & -1 & 1 & -1 \\
      1 & -1 & 1 & -1 & -1 & 1 & -1 & ~1 \\
      1 & -1 & -1 & 1 & 1 & -1 & -1 & ~1 \\
      1 & -1 & -1 & -1 & ~1 & 1 & 1 & -1 
    \end{bmatrix}
    \begin{bmatrix}
        A'_1 \\
        A'_2 \\
        A'_3 \\
        A'_4 \\
        0   \\
        0 \\
        0 \\
        0
    \end{bmatrix}
\end{eqnarray}
\normalsize
Using this, we get $\{A_x\}$ in terms of $\{A'_y\}$ as
\begin{equation}
\begin{aligned}
    A'_1 +A'_2 +A'_3 +A'_4 = 8 \ A_1 \\
    A'_1 +A'_2 +A'_3 -A'_4 = 8 \ A_2 \\
    A'_1 +A'_2 -A'_3 +A'_4 = 8 \ A_3 \\
    A'_1 +A'_2 -A'_3 -A'_4 = 8 \ A_4 \\
    A'_1 -A'_2 +A'_3 +A'_4 = 8 \ A_5 \\
    A'_1 -A'_2 +A'_3 -A'_4 = 8 \ A_6 \\
    A'_1 -A'_2 -A'_3 +A'_4 = 8 \ A_7 \\
    A'_1 -A'_2 -A'_3 -A'_4 = 8 \ A_8 
    \end{aligned}
\end{equation}
Now, employing the condition $A_1^2 = A_x^2~ \forall x \in \{2,3,4,5,6,7\}$, we obtain constraints on the anticommutators $\{A'_y ,A'_{y'}\} \equiv A'_{yy'}$. As shown in Eq.~(\ref{aceqn}), these constraints can be expressed in matrix form as
\begin{eqnarray}\label{aceq4}
\mathbf{P}\cdot \mathcal{S}=0 \implies 
    \begin{bmatrix}
        A'_{12} & A'_{13} & A'_{14} & A'_{23} & A'_{24} & A'_{34} 
    \end{bmatrix}
    \begin{bmatrix}
        0 & 0 & 0 & 1 & 1 & 1 & 1 \\
        0 & 1 & 1 & 0 & 0 & 1 & 1 \\
        1 & 0 & 1 & 0 & 1 & 0 & 1 \\
        0 & 1 & 1 & 1 & 1 & 0 & 0 \\
        1 & 0 & 1 & 1 & 0 & 1 & 0\\
        1 & 1 & 0 & 0 & 1 & 1 & 0
    \end{bmatrix}
    =
    \begin{bmatrix}
        0 & 0 & 0 & 0 & 0 & 0 & 0
    \end{bmatrix}
\end{eqnarray}
In this equation, the $6\times 7$ matrix is $\mathcal{S}$ similar to Eq.~(\ref{aceqn}). Now we mimic the row and column operations as illustrated in Fig.~(\ref{co}) to make $\mathcal{S}$ as upper-triangular as shown in the following.

\begin{eqnarray}
    \begin{bmatrix}
        0 & 0 & 0 & 1 & 1 & 1 & 1 \\
        0 & 1 & 1 & 0 & 0 & 1 & 1 \\
        1 & 0 & 1 & 0 & 1 & 0 & 1 \\
        0 & 1 & 1 & 1 & 1 & 0 & 0 \\
        1 & 0 & 1 & 1 & 0 & 1 & 0\\
        1 & 1 & 0 & 0 & 1 & 1 & 0
    \end{bmatrix}
    \longrightarrow\left[
    \begin{array}{ccc|ccc|c}
        1 & 0 & 0 & 1 & 1 & 0 & 1 \\
        0 & 1 & 0 & 1 & 0 & 1 & 1 \\
        0 & 0 & 1 & 0 & 1 & 1 & 1 \\
        \hline
        1 & 1 & 0 & 0 & 1 & 1 & 0 \\
        1 & 0 & 1 & 1 & 0 & 1 & 0\\
        0 & 1 & 1 & 1 & 1 & 0 & 0
    \end{array}\right]
    \longrightarrow\left[
    \begin{array}{ccc|ccc|c}
        1 & 0 & 0 & 1 & 1 & 0 & 1 \\
        0 & 1 & 0 & 1 & 0 & 1 & 1 \\
        0 & 0 & 1 & 0 & 1 & 1 & 1 \\
        \hline
        0 & 0 & 0 & -2 & 0 & 0 & -2 \\
        0 & 0 & 0 & 0 & -2 & 0 & -2\\
        0 & 0 & 0 & 0 & 0 & -2 & -2
    \end{array}\right]
\end{eqnarray}
Thus, being $\mathcal{S}$ is a full rank, the only solution to Eq.~(\ref{aceq4}) is $\{A'_y ,A'_{y'}\} =0$.


\section{Construction of Alice's and Bob's Observables which yield optimal Bell value for two copies of Bell diagonal states for $n=4$} \label{example}

We will use notations $\sigma_i$ and $\sigma_i'$ to denote the observables of Alice and Bob respectively. Using these notations, a two-qubit Bell diagonal state is expressed as
\begin{eqnarray}
    \rho_{D}= \frac{1}{4} \qty(\openone\otimes\openone + \sum_{i=1}^3 \lambda_i \ \qty(\sigma_i \otimes \sigma'_i))
\end{eqnarray}
Two copies of this state $(\rho=  \rho_{D}\otimes\rho_{D})$ is then given by
\begin{equation}
  \rho=  \frac{1}{16} \qty[\Motimes_{4}\openone + \sum_{i=1}^{3} \lambda_i\qty( \sigma_i\otimes \openone)_A \Motimes \qty(\sigma'_i\otimes \openone)_B +  \sum_{j=1}^{3} \lambda_j\qty(\openone\otimes\sigma_j)_A \Motimes \qty(\openone\otimes\sigma'_j)_B + \sum_{m,n=1}^{3} \lambda_m\lambda_n \qty(\sigma_m \otimes\sigma_n)_A \Motimes \qty(\sigma'_m \otimes\sigma'_n)_B]
\end{equation}
The correlation matrix of this state is expressed as
\begin{eqnarray}
T_{\rho}=
    \begin{bmatrix}
        \lambda_1 & 0 & 0 & 0 & 0 & 0 & 0 & 0 & 0 & 0 & 0 & 0 & 0 & 0 & 0 \\
        0 & \lambda_2 & 0 & 0 & 0 & 0 & 0 & 0 & 0 & 0 & 0 & 0 & 0 & 0 & 0 \\
        0 & 0 & \lambda_3 & 0 & 0 & 0 & 0 & 0 & 0 & 0 & 0 & 0 & 0 & 0 & 0 \\
        0 & 0 & 0 & \lambda_1 & 0 & 0 & 0 & 0 & 0 & 0 & 0 & 0 & 0 & 0 & 0 \\
        0 & 0 & 0 & 0 & \lambda_2 & 0 & 0 & 0 & 0 & 0 & 0 & 0 & 0 & 0 & 0 \\
        0 & 0 & 0 & 0 & 0 & \lambda_3 & 0 & 0 & 0 & 0 & 0 & 0 & 0 & 0 & 0 \\
        0 & 0 & 0 & 0 & 0 & 0 & \lambda_1^2 & 0 & 0 & 0 & 0 & 0 & 0 & 0 & 0 \\
        0 & 0 & 0 & 0 & 0 & 0 & 0 & \lambda_1 \lambda_2 & 0 & 0 & 0 & 0 & 0 & 0 & 0 \\
        0 & 0 & 0 & 0 & 0 & 0 & 0 & 0 & \lambda_1 \lambda_3 & 0 & 0 & 0 & 0 & 0 & 0 \\
        0 & 0 & 0 & 0 & 0 & 0 & 0 & 0 & 0 & \lambda_1 \lambda_2 & 0 & 0 & 0 & 0 & 0 \\
        0 & 0 & 0 & 0 & 0 & 0 & 0 & 0 & 0 & 0 & \lambda_2^2 & 0 & 0 & 0 & 0 \\
        0 & 0 & 0 & 0 & 0 & 0 & 0 & 0 & 0 & 0 & 0 & \lambda_2 \lambda_3 & 0 & 0 & 0 \\
        0 & 0 & 0 & 0 & 0 & 0 & 0 & 0 & 0 & 0 & 0 & 0 & \lambda_1 \lambda_3 & 0 & 0 \\
        0 & 0 & 0 & 0 & 0 & 0 & 0 & 0 & 0 & 0 & 0 & 0 & 0 & \lambda_2 \lambda_3 & 0 \\
        0 & 0 & 0 & 0 & 0 & 0 & 0 & 0 & 0 & 0 & 0 & 0 & 0 & 0 & \lambda_3^2
    \end{bmatrix}
\end{eqnarray} 
The six sets of anticommuting observables in the operator space acting on $\mathscr{H}_2 \otimes \mathscr{H}_2$ is given by
\begin{equation} \label{acset}
\begin{aligned}
\mathcal{S}_1 &=\lbrace\sigma_{x}\otimes\sigma_{x},\sigma_{x}\otimes\sigma_{y},\sigma_{x}\otimes\sigma_{z},\sigma_{y}\otimes\openone, \sigma_{z}\otimes\openone\rbrace \ ; \ \ \mathcal{S}_2 = \lbrace\sigma_{y}\otimes\sigma_{x},\sigma_{y}\otimes\sigma_{y},\sigma_{y}\otimes\sigma_{z},\sigma_{x}\otimes\openone, \sigma_{z}\otimes\openone\rbrace \\ 
\mathcal{S}_3 &=\lbrace\sigma_{z}\otimes\sigma_{x},\sigma_{z}\otimes\sigma_{y},\sigma_{z}\otimes\sigma_{z},\sigma_{y}\otimes\openone, \sigma_{x}\otimes\openone\rbrace \ ; \ \ \mathcal{S}_4 = \lbrace\sigma_{x}\otimes\sigma_{x},\sigma_{y}\otimes\sigma_{x},\sigma_{z}\otimes\sigma_{x},\openone\otimes\sigma_{y}, \openone\otimes\sigma_{z}\rbrace\\
\mathcal{S}_5 &= \lbrace\sigma_{x}\otimes\sigma_{y},\sigma_{y}\otimes\sigma_{y},\sigma_{z}\otimes\sigma_{y},\openone\otimes\sigma_{x}, \openone\otimes\sigma_{z}\rbrace  \ ; \ \ \mathcal{S}_6 = \lbrace\sigma_{x}\otimes\sigma_{z},\sigma_{y}\otimes\sigma_{z},\sigma_{z}\otimes\sigma_{z},\openone\otimes\sigma_{x}, \openone\otimes\sigma_{y}\rbrace
\end{aligned}
\end{equation}
Without loss of generality, we assume $\lambda_1 \leq \lambda_2 \leq \lambda_3$. The maximisation in LHS of Eq.~(\ref{belldiag}) in the main text is
\begin{equation}
     \max_{\mathcal{S}^{(n)}_l} \sqrt{\sum_{i=1}^n \mu_i} = \sqrt{\lambda_3^2 |\lambda|^2 + \lambda_2^2} \ ; \ \ \ \text{where $|\lambda|^2= \lambda_1^2 +\lambda_2^2 +\lambda_3^2$}
\end{equation}
Recalling Proposition \ref{prop}, the scaled observables of Alice $\{\mathcal{A}_y \}$ will belong to one of these sets. We see that the maximum value over all sets is attained for the following choice of observables from $\mathcal{S}_3$ in Eq.~(\ref{acset})
\begin{equation}
\begin{aligned}
    \mathcal{A}_1 &= \sigma_{z}\otimes \sigma_{x} \ ; \ \ \mathcal{A}_2 = \sigma_{z}\otimes\sigma_{y} \ ; \ \ \mathcal{A}_3 = \sigma_{z}\otimes\sigma_{z} \ ; \ \ \mathcal{A}_4 = \sigma_{y}\otimes\openone \ ;  \\
    B_1 &= \sigma_{z}'\otimes\sigma_{x}' \ ; \ \ B_2 = \sigma_{z}'\otimes\sigma_{y}' \ ; \ \ B_3 =\sigma_{z}' \otimes\sigma_{z}' \ ; \ \ B_4 = \sigma_{y}'\otimes\openone \ ; \\
    \omega_1 &= \frac{8 \lambda_1 \lambda_3}{\sqrt{\lambda_2^2 + \lambda_3^2 |\lambda|^2}} \ ; \ \ \omega_2 = \frac{8 \lambda_2 \lambda_3}{\sqrt{\lambda_2^2 + \lambda_3^2 |\lambda|^2}} \ ; \ \ \omega_3 = \frac{8\lambda_3^2}{\sqrt{\lambda_2^2 + \lambda_3^2 |\lambda|^2}} \ ; \ \ \omega_4 = \frac{8 \lambda_2 }{\sqrt{\lambda_2^2 + \lambda_3^2 |\lambda|^2}}
    \end{aligned}
\end{equation}
From which we can get $\{A_x\}$ in terms of $\{A'_y =\omega_y \mathcal{A}_y \}$ as
\begin{equation}
\begin{aligned}
    A_1 &= \frac{1}{\sqrt{\lambda_2^2 + \lambda_3^2 |\lambda|^2}}\qty(\lambda_3 \sigma_{z}\otimes(\lambda_1 \sigma_x +\lambda_2 \sigma_y +\lambda_3 \sigma_z) + \lambda_2 \sigma_{y}\otimes\openone) \\
    A_2 &= \frac{1}{\sqrt{\lambda_2^2 + \lambda_3^2 |\lambda|^2}}\qty(\lambda_3 \sigma_z \otimes(\lambda_1 \sigma_x +\lambda_2 \sigma_y +\lambda_3 \sigma_z) - \lambda_2 \sigma_y \otimes \openone) \\
    A_3 &= \frac{1}{\sqrt{\lambda_2^2 + \lambda_3^2 |\lambda|^2}}\qty(\lambda_3 \sigma_z \otimes(\lambda_1 \sigma_x +\lambda_2 \sigma_y -\lambda_3 \sigma_z) + \lambda_2 \sigma_y \otimes \openone) \\
    A_4 &= \frac{1}{\sqrt{\lambda_2^2 + \lambda_3^2 |\lambda|^2}}\qty(\lambda_3 \sigma_z \otimes(\lambda_1 \sigma_x +\lambda_2 \sigma_y -\lambda_3 \sigma_z) - \lambda_2 \sigma_y \otimes \openone) \\
    A_5 &= \frac{1}{\sqrt{\lambda_2^2 + \lambda_3^2 |\lambda|^2}}\qty(\lambda_3 \sigma_z \otimes(\lambda_1 \sigma_x -\lambda_2 \sigma_y +\lambda_3 \sigma_z) + \lambda_2 \sigma_y \otimes \openone) \\
    A_6 &= \frac{1}{\sqrt{\lambda_2^2 + \lambda_3^2 |\lambda|^2}}\qty(\lambda_3 \sigma_z \otimes(\lambda_1 \sigma_x -\lambda_2 \sigma_y +\lambda_3 \sigma_z) - \lambda_2 \sigma_y \otimes \openone) \\
    A_7 &=\frac{1}{\sqrt{\lambda_2^2 + \lambda_3^2 |\lambda|^2}}\qty(\lambda_3 \sigma_z \otimes(\lambda_1 \sigma_x -\lambda_2 \sigma_y -\lambda_3 \sigma_z) + \lambda_2 \sigma_y \otimes \openone) \\
    A_8 &=\frac{1}{\sqrt{\lambda_2^2 + \lambda_3^2 |\lambda|^2}}\qty(\lambda_3 \sigma_z\otimes(\lambda_1 \sigma_x -\lambda_2 \sigma_y -\lambda_3 \sigma_z) - \lambda_2 \sigma_y \otimes \openone) 
    \end{aligned}
\end{equation}


\end{document}